\renewcommand\nomgroup[1]{%
  \ifstrequal{#1}{P}{\vspace{10pt}\item[\textbf{Parameters}]}{%
  \ifstrequal{#1}{V}{\vspace{10pt}\item[\textbf{Variables}]}{}}{%
  \ifstrequal{#1}{S}{\vspace{10pt}\item[\textbf{Sets}]}{}}%
}
\definecolor{addblue}{rgb}{0.1,0,0.8}
\definecolor{darkgrn}{rgb}{0, 0.75, 0}
\titlespacing\section{0pt}{2pt plus 2pt minus 2pt}{2pt plus 2pt minus 2pt}
\titlespacing\subsection{0pt}{2pt plus 2pt minus 2pt}{2pt plus 2pt minus 2pt}
\titlespacing\subsubsection{0pt}{2pt plus 2pt minus 2pt}{2pt plus 2pt minus 2pt}
\xpatchcmd{\NCC@ignorepar}{%
\abovedisplayskip\abovedisplayshortskip}
{%
\abovedisplayskip0.2\abovedisplayshortskip%
\belowdisplayskip0.2\belowdisplayshortskip}
{}{}
\begin{document}

\thinmuskip=0mu
%

\title{Scalable Two-Stage Stochastic Optimal Power Flow via Separable Approximation}

%
%
%

\author{Shishir~Lamichhane,~\IEEEmembership{Student Member,~IEEE,}
        Abodh~Poudyal,~\IEEEmembership{Member,~IEEE,} Nicholas~R.~Jones, Bala~Krishnamoorthy,
        and~Anamika~Dubey,~\IEEEmembership{Senior~Member,~IEEE}
\thanks{
          This work is supported in part by the National Science Foundation (NSF) under Career Award Number 1944142, and in part by the U.S. Department of Energy’s Office of Energy Efficiency and Renewable Energy (EERE) under the Solar Energy  Technologies Office Award Number DE-EE00010424. }
\thanks{Shishir Lamichhane, Abodh Poudyal and Anamika Dubey  are with the School of Electrical Engineering and Computer Science, Washington State University, Pullman, WA;
  Nicholas R.Jones and Bala Krishnamoorthy are with the Department of Mathematics and Statistics, Washington State University, Vancouver, WA;
  (corresponding author email: shishir.lamichhane@wsu.edu).}
  \vspace{-20pt}
}

\maketitle

\begin{abstract}

  This paper proposes a Separable Projective Approximation Routine -- Optimal Power Flow (SPAR-OPF) framework for solving two-stage stochastic optimization problems in 
  power systems.
  The framework utilizes a separable piecewise linear approximation of the value function and learns the function based on sample sub-gradient information. 
  We present two formulations to model the learned value function, and compare their effectiveness.
  Additionally, an efficient 
  statistical method is introduced to assess the quality of the obtained solutions. The effectiveness of the proposed framework is validated using distributed generation siting and sizing problem in three-phase unbalanced power distribution systems as an example. 
  Results show that the framework approximates the value function with over 98\% accuracy and provides high-quality solutions with an optimality gap of less than 1\%. The framework scales efficiently with system size, generating high-quality solutions in a short time when applied to a 9500-node distribution system with 1200 scenarios, while the extensive formulations and progressive hedging failed to solve the problem.
\end{abstract}

\begin{IEEEkeywords}
Decision-making under uncertainty, power systems planning, stochastic optimal power flow, value function approximation.
\end{IEEEkeywords}

\section{Introduction}

\IEEEPARstart{T}{}he reliable, secure, and economical operation of power systems relies on solving a variety of optimization problems. Relevant applications include economic dispatch (ED), unit commitment (UC), market clearing, reserve planning, system restoration, expansion planning, and the optimal placement of distributed generation~\cite{OPF_survey_latest}. These optimization challenges encompass both transmission and distribution systems, with their complexity depending on factors such as problem type, decision time horizon, and the nature of the decision variables. Traditionally, grid optimization focused on a limited set of credible scenarios, where the nonlinearity of power flow and integer variables mainly posed significant challenges. However, the integration of variable energy resources, flexible/controllable loads, and pronounced weather-related uncertainties have increased the complexity of grid operations significantly~\cite{more_scenarios_to_include_latest}. Modern grids must now be optimized under a much broader range of scenarios, making optimization under uncertainty a far greater challenge that demands scalable and robust methods.

   Optimization under uncertainty has been widely studied in the power systems domain, particularly in relation to various Optimal Power Flow (OPF)-based problems. In this context, the power systems community has explored a range of formulations, including stochastic, robust, distributionally robust, and chance-constrained optimization, as well as their hybrid forms~\cite{StochasticReview_LP}. Many real-world power system challenges, such as UC, ED, proactive dispatch, and investment planning, naturally lend themselves to a two-stage decision-making framework~\cite{uncertainty_modeling_method_review}. In this setup, the first stage involves making non-anticipative decisions, while the second stage optimizes operational outcomes once uncertainty is realized. However, for large-scale systems, solving these two-stage problems becomes computationally intractable, particularly in resource-constrained environments~\cite{LineRoaldReviewStochastic}. This presents a critical gap in current methods, as the size and complexity of real-world power systems continue to grow. \emph{In this paper, we present a novel, computationally efficient, and scalable framework for solving two-stage stochastic optimization problems in power systems.} The proposed method leverages a separable piecewise linear approximation of value function to achieve near-optimal solutions, significantly enhancing tractability for large-scale systems while maintaining high solution quality.
   
\subsection{Background and Related Work}

Traditional approaches for solving two-stage optimization problems include chance-constrained optimization, robust optimization, and stochastic optimization. Chance-constrained optimization requires knowledge of uncertain parameter distributions and generally results in an optimization problem of higher complexity (e.g. non-convex problems, even when the original problem is convex)~\cite{LineRoaldReviewStochastic}. 
Robust optimization tends to be overly conservative, resulting in inefficient resource utilization. In contrast, stochastic optimization directly uses scenario sets as input, considering recourse actions in the second stage to ensure feasibility, thus eliminating the need to model or approximate uncertainty distributions. This motivates us to consider an approach based on stochastic optimization.

 Ideally, two-stage stochastic optimization involves solving a deterministic extensive formulation (EF) where multiple copies of the second stage problem are added to the model for each scenario~\cite{resilience_driven_deterministic}.
 Solving such an extensive form is computationally intractable, as the size of the decision space increases exponentially with the number of scenarios.
  With the increasing level of uncertainty, it is not advisable to excessively reduce the number of scenarios just for computational purposes, as a reasonable number of scenarios need to be considered in order to capture the uncertainty; otherwise, the obtained solutions may be sub-optimal or infeasible~\cite{more_scenarios_to_include_latest}. 
Traditional approaches to tackle scalability challenges include (1) solving the EF multiple times with reduced scenario size, e.g., using a Sample Average Approximation (SAA), (2) using a stage-wise decomposition, e.g., Benders decomposition (BD), and (3) using a scenario-wise decomposition, e.g., progressive hedging (PH)~\cite{LineRoaldReviewStochastic}.
SAA involves solving the EF problem multiple times while taking a smaller sample size from the set of scenarios to approximate the value function.
 For example, SAA was used to solve the unit commitment problem considering the 118-bus system with a sample size of 50 scenarios~\cite{SAA_latest_118_bus_50_scenarios_latest}.
However, for a realistic system size, the problem poses significant computational challenges even with a relatively small number of scenarios. 
BD and its variants have been utilized in solving various OPF problems. For instance, BD was used to solve the unit commitment problem for the IEEE 118 bus system considering 10 scenarios~\cite{stochastic_unit_commitment_10_scenarios_BD}.
Stochastic decomposition, a variant of BD, has been applied in~\cite{stochastic_decomposition_economic_dispatch_sub_hourly_stochastic_decomposition_10_scenarios} for solving the economic dispatch problem considering a maximum system size of 96 buses and a maximum number of 100 scenarios. As the number of scenarios grows, the demand for generating valid cuts increases rapidly, making the problem intractable and unsuitable for larger systems with numerous scenarios.
It was also observed that BD often leads to long computational times, excessive use of memory resources, poor feasibility and optimality cuts, and slow convergence (tail-off effect)~\cite{software_packages_review}.
PH has been applied to distributed generation (DG) siting and sizing on a 123-bus system with 400 scenarios~\cite{2020_resilience_oriented_DG_siting_sizing_PA}, and to proactive dispatch and investment planning on a 2000-bus system with 512 scenarios~\cite{proactive_dispatch_PA}.
However, in real-world applications, the default settings for PH often pose challenges~\cite{bynum2021pyomo}. 
Moreover, due to its decompositional nature, PH is most effective in parallel architectures with multi-core devices, thereby restricting its broader applicability.
 Separate from these methods, the optimization community has explored the use of learning-based approaches to solve two-stage stochastic optimization problems. One popular approach includes a separable approximation of a value function where the coordinate functions are approximated using a learning-based technique. A static piecewise linear approximation was proposed in~\cite{1990PowellStaticVFA}, but this method, being not adaptive, limits the quality of the solution. Later, a  Separable Projective Approximation Routine (SPAR) was introduced, which adaptively constructs a sequence of piecewise linear functions using sample sub-gradient information~\cite{VFA_powell}. It proved to be optimal when the value function is separable while also providing a near-optimal solution for the non-separable case, which proved faster than BD~\cite{VFA_powell}. Unlike popular black-box neural network models, these methods maintain the explainability of the learned functions and hence are preferable for power systems, where interpretability of decision-making processes are crucial for safety, reliability, and regulatory compliance.  

\subsection{Contributions}
We introduce \emph{Separable Projective Approximation Routine - Optimal Power Flow (SPAR-OPF)}, an efficient framework designed to solve two-stage stochastic optimization problems involving OPF in power systems. Leveraging a SPAR-based approach, our framework scales effectively with system size and can be implemented in resource-constrained environments. To the best of our knowledge, this is the first framework to solve OPF-based two-stage stochastic optimization problems using separable projective approximation of value function.
From the application perspective, we consider the following two-stage setting: Stage-1 makes \emph{ex-ante} decisions based on forecasts, and Stage-2 makes \emph{ex-post} decisions after the realization of the event/uncertainty. Stage-1 can be thought of as a long-term or operational planning problem, and Stage-2 represents real-time operations. The general assumption is that planning decisions should be optimized considering optimal real-time operations. 
The computational performance of SPAR-OPF indicates its applicability for solving OPF under uncertainty for realistic power systems.
The following are the main contributions of this paper. 


\begin{itemize}[leftmargin=*]
\item \emph{SPAR-OPF framework for two-stage stochastic optimization:} We propose a framework to solve two-stage stochastic OPF problems in power systems based on separable projective approximation of value function. We present two formulations: the lambda method and the epigraph method to model the learned value function and compare their performance.
\item \emph{Posterior statistical bounds for optimal solutions:} We propose a computationally efficient statistical method that uses a posterior analysis to obtain lower and upper bounds for the optimal objective function value as measures of quality of the obtained solutions. 
\item \emph{Validation and scalability assessment:}
We validate and assess the scalability of SPAR-OPF on a two-stage OPF problem optimizing the siting and sizing decisions of PV-based distributed generation (DG) for enhanced and efficient grid operation. 
The proposed SPAR-OPF approach is compared with approaches using Extensive Formulations (EF) and the Progressive Hedging (PH) algorithm. Finally, the scalability of SPAR-OPF is demonstrated on a 9500-node distribution test system.  


%

\end{itemize}

\vspace{0.3cm}
\section{Separable Projective Approximation
Routine -- Optimal Power Flow (SPAR-OPF) Framework}\label{sec:Modeling}
\subsection{Problem Formulation}

We present SPAR-OPF as a general two-stage stochastic optimization model with optimal planning in the first stage and optimal operation in the second stage. 
Subsequently, specific details for the framework using DG siting and sizing as an example of a two-stage stochastic OPF problem are presented. 
The first stage problem is denoted as $\textbf{P}_{\textbf{g1}}$ and is defined in~(\ref{eq:general opf first stage}).
It is important to mention that we do not impose any restrictions on the nature or structure of $\textbf{P}_{\textbf{g1}}$. The objective function consists of a first-stage cost $H(x,y)$ that is dependent only on the first-stage decision variables $x$ and $y$, and the expected recourse cost $Q(x, \xi)$.
We obtain $Q(x, \xi)$ by solving the second stage OPF problem denoted by $\textbf{P}_{\textbf{g2}}$ with $z^\xi$ as decision variables, as given in~(\ref{eq:general opf second stage}). OPF has been shown to be exact under various convex relaxations and approximations~\cite{OPF_relaxations}. We assume such convex problems in the second stage with particular structures (see Remark~\ref{pre:remark problem scope}) that the scientific community has widely considered when solving power system optimization problems under uncertainty~\cite{StochasticReview_LP, resilience_driven_deterministic, stochastic_unit_commitment_10_scenarios_BD, resilience_oriented_DG_siting_sizing}. Here, $x$ is the set of linking variables involved in the second stage for each scenario $\xi$. Note that the constraints in the second stage problem are separated into two types: one dependent on the first stage decision variable $x$, denoted as linking constraints~(\ref{eq:general second stage constraint I}), and the rest of the constraints (\ref{eq:general second stage constraint II-a})--(\ref{eq:general second stage constraint II-b}) that do not directly involve the linking variables. Most general two-stage power grid optimization problems can be reformulated to fit the problem structure defined in~(\ref{eq:general opf first stage})--(\ref{eq:general second stage constraint II-b}).





\useshortskip
\begin{equation}
        (\textbf{P}_{\textbf{g1}}):  \quad \min_{x,y} ~ H(x,y) + \mathbb{E}_{\xi}[Q(x, \xi)] \label{eq:general opf first stage}
\end{equation}
\useshortskip
\useshortskip
 \begin{equation}
    \text{\quad s.t. } \quad   \text{\{planning constraints\}} \label{eq:general first stage constraint}
\end{equation}






\vspace{-1.2em}
 \smallskip
\begin{equation}
 (\textbf{P}_{\textbf{g2}}):\quad  Q(x,\xi) = ~ \min_{z^\xi}~ f_0(z^\xi) \label{eq:general opf second stage} \hspace{0.4in}
\end{equation}

\useshortskip


\useshortskip
\begin{equation}
         \text{\quad s.t. }\quad  A^\xi z^\xi \leq x\   \text{\{linking constraints\}} \label{eq:general second stage constraint I}
\end{equation}

\useshortskip
 \begin{equation}
  f_i(z^\xi) \leq 0\quad \forall\ i \in 1,2,...m \label{eq:general second stage constraint II-a}
\end{equation}

\useshortskip
 \begin{equation}
  B^\xi z^\xi = e^\xi\quad \label{eq:general second stage constraint II-b}
\end{equation}

\AtBeginEnvironment{proof}{\vspace{-8pt}}
\AtEndEnvironment{proof}{\vspace{-4pt}}

\newtheorem{proposition}{Proposition}
\newtheorem{corollary}{Corollary}
\newtheorem{lemma}{Lemma}

\begin{proposition} \label{pre:preposition 1}
If the second stage problem is convex with continuous recourse variables, $Q(x,\xi)$ is a convex function of the linking variable $x$.
\end{proposition}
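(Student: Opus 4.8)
The plan is to prove convexity directly from the definition of $Q(\cdot,\xi)$, exploiting the linear manner in which the parameter $x$ enters the second-stage constraints. Fix a scenario $\xi$ and work on the effective domain $\mathcal{D} = \{x : Q(x,\xi) < \infty\}$, i.e.\ the set of right-hand sides for which $\textbf{P}_{\textbf{g2}}$ is feasible. First I would fix arbitrary $x_1, x_2 \in \mathcal{D}$ and a weight $\lambda \in [0,1]$, and set $x_\lambda = \lambda x_1 + (1-\lambda)x_2$. Letting $z_1$ and $z_2$ denote optimal recourse decisions for $x_1$ and $x_2$, the central idea is to use the convex combination $z_\lambda = \lambda z_1 + (1-\lambda)z_2$ as a \emph{feasible} candidate for the parameter $x_\lambda$.

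The key step is to verify that $z_\lambda$ is feasible for $\textbf{P}_{\textbf{g2}}$ at $x_\lambda$, which is where each structural assumption is used. Because the linking constraint~(\ref{eq:general second stage constraint I}) is linear in $z$ with $x$ on the right-hand side, $A^\xi z_\lambda = \lambda A^\xi z_1 + (1-\lambda) A^\xi z_2 \le \lambda x_1 + (1-\lambda) x_2 = x_\lambda$. The equality constraints~(\ref{eq:general second stage constraint II-b}) are linear, so $B^\xi z_\lambda = e^\xi$ holds by the same averaging. Finally, convexity of each $f_i$ (guaranteed by the convex second-stage assumption) gives $f_i(z_\lambda) \le \lambda f_i(z_1) + (1-\lambda) f_i(z_2) \le 0$, so constraints~(\ref{eq:general second stage constraint II-a}) hold as well.

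With feasibility in hand, I would close the argument using the minimality defining $Q$ together with convexity of the objective $f_0$:
\begin{equation*}
Q(x_\lambda, \xi) \le f_0(z_\lambda) \le \lambda f_0(z_1) + (1-\lambda) f_0(z_2) = \lambda Q(x_1,\xi) + (1-\lambda) Q(x_2,\xi),
\end{equation*}
which is exactly the desired convexity inequality. Note that the same feasibility construction applied to $z_1, z_2$ also shows $x_\lambda \in \mathcal{D}$, so $\mathcal{D}$ is itself convex and the inequality is meaningful on its domain.

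I expect the only delicate point to be attainment of the minimum in $\textbf{P}_{\textbf{g2}}$: the argument above presumes optimal $z_1, z_2$ exist. This is handled cleanly either by invoking standard attainment conditions for the convex program, or by replacing $z_1, z_2$ with $\epsilon$-optimal points and letting $\epsilon \to 0$; neither alters the displayed chain of inequalities. The continuity of the recourse variables matters precisely here, since it keeps the feasible set convex---were integrality imposed on $z^\xi$, the convex combination $z_\lambda$ would generally leave the feasible set and the argument would break down.
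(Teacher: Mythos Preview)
Your direct primal argument is correct and complete: you verify feasibility of the convex combination $z_\lambda$ for each constraint block and then use convexity of $f_0$ to close the inequality, and you handle the attainment issue cleanly via $\epsilon$-optimal solutions. This is a genuinely different route from the paper's proof, which instead passes to the Lagrangian dual: writing $q(\alpha,\beta,\nu;x,\xi)$ as the infimum over $z^\xi$ of the Lagrangian, the paper observes that $q$ is affine in $x$ for each fixed $(\alpha,\beta,\nu)$, so $q^D(x,\xi)=\max_{\alpha,\beta,\nu} q(\alpha,\beta,\nu;x,\xi)$ is a pointwise maximum of affine functions and hence convex; strong duality under Slater's condition then gives $Q(x,\xi)=q^D(x,\xi)$. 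Your approach is more elementary and avoids the constraint-qualification assumption entirely. The paper's dual representation, however, pays off immediately in the next step (Corollary~\ref{pre:corollary 1}): when the second stage is an LP, the dual feasible region is polyhedral with finitely many vertices, so $q^D(x,\xi)$ is a maximum of finitely many affine functions in $x$ and piecewise linearity falls out for free---a conclusion your primal argument does not deliver without additional work.
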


\begin{proof}
Let us consider $\alpha\geq 0$, $\beta \geq 0$ and $\nu$ be the dual variables corresponding to (\ref{eq:general second stage constraint I}), (\ref{eq:general second stage constraint II-a}) and (\ref{eq:general second stage constraint II-b}) respectively. Note that $x$ is treated as parameter while solving second stage problem separately. Then, the Lagrangian of  $\textbf{P}_{\textbf{g2}}$ is given as 
\vspace{3pt}
\useshortskip
\begin{equation}
\begin{gathered}
    L(z^\xi, \alpha,\beta,\nu; x,\xi) = f_0(z^\xi) + \alpha^T(A^\xi z^\xi - x) + \\\sum_{i = 1}^{m} \beta_if_i(z^\xi) + \nu^T (B^\xi z^\xi - e^\xi) \label{eq:proof lagrange expression}
    \end{gathered}
\end{equation}
\noindent The Lagrangian dual function $q(\alpha,\beta,\nu; x, \xi)$ and the associated dual problem optimal value $q^D(x,\xi)$ are then obtained as 

\useshortskip
\begin{equation}
  \begin{gathered}
    q(\alpha,\beta,\nu; x, \xi) = \inf_{z^\xi}~  f_0(z^\xi) + \alpha^T(A^\xi z^\xi - x) + \quad\quad \\\sum_{i = 1}^{m} \beta_if_i(z^\xi) + \nu^T (B^\xi z^\xi - e^\xi) \quad \text{\rm and} \label{eq:proof lagrange dual expression}
  \end{gathered}
\end{equation}

\useshortskip
\begin{equation}
  \begin{gathered}
    q^D(x,\xi) = \max_{\alpha, \beta,\nu}~ q(\alpha,\beta,\nu; x, \xi)\label{eq:lagrange_dual_problem}
  \end{gathered}
\end{equation}

The Lagrangian dual problem (\ref{eq:lagrange_dual_problem}) is convex.
With the assumption that Slater's condition is satisfied for the inequalities in (\ref{eq:general second stage constraint I}) and (\ref{eq:general second stage constraint II-a}), which is a weak assumption in power systems problems~\cite{OPF_relaxations}, 
we get by strong duality that $Q(x,\xi) = q^D(x,\xi)$. Hence, $Q(x,\xi)$ is a convex function of $x$.
\end{proof}

\begin{corollary}
\label{pre:corollary 1}
If the second stage problem is a linear problem (LP), $Q(x,\xi)$ is a piecewise linear convex function of the linking variable $x$.
\end{corollary}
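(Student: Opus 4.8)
The plan is to reuse Proposition~\ref{pre:preposition 1} for convexity and then upgrade it to piecewise linearity using the special structure that an LP imposes on the dual. Since an LP is a convex program with continuous recourse, Proposition~\ref{pre:preposition 1} already gives that $Q(x,\xi)$ is convex in $x$, so the only genuinely new content is to show that the function is \emph{piecewise linear}.

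To this end, I would specialize the Lagrangian machinery of the previous proof to the linear case, writing $f_0(z^\xi)=c^Tz^\xi$ and collecting the inequalities $f_i(z^\xi)\le 0$ into a single linear system $D^\xi z^\xi \le d^\xi$. Substituting into~(\ref{eq:proof lagrange dual expression}) and minimizing the now affine-in-$z^\xi$ Lagrangian over $z^\xi$, the inner infimum is finite only when the coefficient of $z^\xi$ vanishes. This yields an explicit dual LP whose objective is $-\alpha^Tx-\beta^Td^\xi-\nu^Te^\xi$ and whose feasible set $\mathcal{D}^\xi=\{(\alpha,\beta,\nu):\,(A^\xi)^T\alpha+(D^\xi)^T\beta+(B^\xi)^T\nu=-c,\ \alpha\ge 0,\ \beta\ge 0\}$ depends only on the scenario $\xi$ and not on the linking variable $x$. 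The crucial observation is that $x$ enters only the dual objective, and does so linearly.

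The key step is then the vertex characterization of linear programs: because $\mathcal{D}^\xi$ is a scenario-fixed polyhedron, whenever $q^D(x,\xi)$ is finite its optimum is attained at one of the finitely many vertices $\{(\alpha_k,\beta_k,\nu_k)\}_{k=1}^{K}$ of $\mathcal{D}^\xi$. Strong duality (Slater's condition, exactly as invoked in Proposition~\ref{pre:preposition 1}) then gives $Q(x,\xi)=q^D(x,\xi)=\max_{1\le k\le K}\{-\alpha_k^Tx-\beta_k^Td^\xi-\nu_k^Te^\xi\}$. Each term in this maximum is an affine function of $x$, so $Q(\cdot,\xi)$ is the pointwise maximum of finitely many affine functions, which is precisely a piecewise linear convex function; this simultaneously reconfirms the convexity already established.

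The main obstacle I anticipate is handling the boundary cases cleanly rather than the core argument. As $x$ varies, the primal may become infeasible or the dual unbounded, so I would restrict attention to the effective domain of $Q(\cdot,\xi)$ -- itself a polyhedron obtained by projecting the primal feasible region -- and argue that the vertex-maximum representation is valid precisely on that domain, a standard fact from parametric LP theory. I would also note that $\mathcal{D}^\xi$ should be pointed for a finite vertex set to exist; if it is not, one passes to a minimal-face representation, which leaves the piecewise linear conclusion unchanged.
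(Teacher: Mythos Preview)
Your proposal is correct and follows essentially the same route as the paper: both specialize the Lagrangian dual from Proposition~\ref{pre:preposition 1} to the LP case, observe that the dual feasible region is a polyhedron independent of $x$ while $x$ enters the dual objective linearly, and conclude that $Q(x,\xi)$ is the maximum of finitely many affine functions of $x$. Your treatment is more explicit---writing out the dual LP, invoking the vertex characterization, and flagging the edge cases of unboundedness and non-pointedness---whereas the paper compresses all of this into a two-sentence sketch, but the underlying argument is the same.
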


\begin{proof}
    Immediately following the proof of Proposition 1, we see that $q(\alpha,\beta,\nu; x, \xi)$ is contained in a polyhedra in $(\alpha, \beta, \nu)$ due to second stage problem being LP. Therefore, $q^D(x,\xi)$ is the maximum of finitely many linear functions in $x$ and thus a piecewise linear convex function of $x$. Hence, $Q(x,\xi)$ is a piecewise linear convex function of $x$.
\end{proof}





\begin{lemma}
    \label{pre:lemma 1}
    If the linking variable $x$ is integral,  then there exists a piecewise linear function $F(x,\xi)$ such that $F(x,\xi)$ is equal to $Q(x,\xi)$ for all integer $x$. Consequently, the optimal solutions of $Q(x,\xi)$ over integer values of $x$ coincide with those of $F(x,\xi)$.

\end{lemma}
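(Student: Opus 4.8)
The plan is to build $F(x,\xi)$ as an explicit piecewise linear interpolant of $Q(\cdot,\xi)$ over the integer points admissible for $x$, and then to read off the statement about optimal solutions from the fact that the two functions agree pointwise on the integers. First I would observe that the planning constraints~(\ref{eq:general first stage constraint}) confine $x$ to a bounded region (as holds for capacity-limited planning problems), so the set of feasible integer values $\mathcal{X} = \{x \in \mathbb{Z}^n : x \text{ feasible}\}$ is finite, and that under the standing recourse assumptions each $Q(x,\xi)$ is a finite real number. The target is therefore a continuous piecewise linear $F(\cdot,\xi)$ whose graph passes through the finite data set $\{(x, Q(x,\xi)) : x \in \mathcal{X}\}$. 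Note that this interpolation argument does not use convexity or any piecewise linear structure of $Q$ itself; it applies to the general convex second stage of Proposition~\ref{pre:preposition 1}, not only to the LP case of Corollary~\ref{pre:corollary 1}.

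For the construction I would triangulate the region spanned by $\mathcal{X}$ using a lattice triangulation whose vertices are exactly the integer points, for instance by subdividing each unit hypercube via the Freudenthal/Kuhn triangulation into simplices with integer vertices. On each simplex $S$ with vertex set $V(S) \subseteq \mathcal{X}$, I would define $F(\cdot,\xi)$ as the unique affine function interpolating the values $\{Q(v,\xi) : v \in V(S)\}$; writing a point of $S$ in barycentric coordinates as $\sum_{v \in V(S)} \lambda_v v$ with $\lambda_v \geq 0$ and $\sum_{v} \lambda_v = 1$, this is $F(x,\xi) = \sum_{v \in V(S)} \lambda_v Q(v,\xi)$. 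Because any two adjacent simplices share a common face whose vertices are lattice points, and an affine interpolant on a simplex is determined entirely by its vertex values, the two local definitions agree on the shared face; hence $F(\cdot,\xi)$ is well defined, continuous, and piecewise linear, and by construction $F(v,\xi) = Q(v,\xi)$ for every $v \in \mathcal{X}$.

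The final claim then follows immediately. Since $F(x,\xi) = Q(x,\xi)$ at every integer $x$, the two first-stage objectives $H(x,y) + \mathbb{E}_\xi[Q(x,\xi)]$ and $H(x,y) + \mathbb{E}_\xi[F(x,\xi)]$ coincide at every integer-feasible pair $(x,y)$. They therefore attain identical values over the entire integer-feasible set, so the two problems share the same optimal value and the same set of minimizers over integer $x$.

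The one place that needs care, and hence what I expect to be the main obstacle, is the multidimensional consistency of the interpolant: guaranteeing that the simplexwise affine pieces glue together continuously. This is exactly why the triangulation must use integer vertices, so that neighboring simplices agree on their shared faces; in the scalar case $n=1$ this degenerates to connecting consecutive integer values by line segments and is immediate. A minor auxiliary point is the finiteness of $\mathcal{X}$, which is what makes $F$ a genuine piecewise linear function with finitely many pieces.
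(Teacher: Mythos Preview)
Your argument is correct. The paper takes a shorter route: it writes down the explicit formula
\[
F(x,\xi)=Q(\lfloor x\rfloor,\xi)+\sum_{i=1}^n \bigl[Q(\lfloor x\rfloor + e_i, \xi) - Q(\lfloor x\rfloor, \xi)\bigr]\bigl(x_i - \lfloor x_i\rfloor\bigr),
\]
i.e., one affine piece per unit hypercube, and simply checks $F=Q$ at integers without worrying about continuity across cube faces. Your Freudenthal/Kuhn triangulation is more work but buys you a genuinely continuous interpolant, whereas the paper's $F$ is in general discontinuous in dimension $n\ge 2$ (the lemma only asks for ``piecewise linear,'' so this is harmless for the paper). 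There is also a slight difference in how the second sentence is read: the paper argues that the \emph{unconstrained} minimizers of $F$ lie at integer points because each affine piece attains its minimum at a cell vertex, so they coincide with the integer minimizers of $Q$; you only assert the weaker (and trivially true) statement that the two functions share minimizers over the integer-feasible set. Your simplicial construction would support the stronger reading just as well---an affine function on a simplex is minimized at a vertex---so if that is the intended conclusion you could add one line to that effect.
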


\begin{proof}
    Consider the function below.
    \useshortskip
    \begin{equation*}
        F(x,\xi)=Q(\lfloor x\rfloor,\xi)+\sum_{i=1}^n \left[Q(\lfloor x\rfloor + e_i, \xi) - Q(\lfloor x\rfloor, \xi)\right]\left(x_i - \lfloor x_i\rfloor\right)
    \end{equation*}

    where $\lfloor x \rfloor$ denotes the component-wise floor of $x$, $x_i$ is $i$-th component of $x$ and $e_i$ 
is the $i$-th unit vector.

    \noindent Observe that $F$ is a piecewise linear extension of $Q$ with integer breakpoints. Trivially, $F(x,\xi)=Q(x,\xi)$ for all integer $x$ as desired. Furthermore, the optima of a piecewise linear function with integer break points occur at integer values, implying that the optima of the two functions coincide.
\end{proof}




\newtheorem{remark}{Remark}

\begin{remark}
\label{pre:remark problem scope}
 We assume the function $Q(x,\xi)$ to be a piecewise linear convex function in the linking variable $x$. Corollary~\ref{pre:corollary 1} ensures that this property holds for LP-based second-stage problems, which are widely used in two-stage stochastic power system problems~\cite{StochasticReview_LP}. Lemma~\ref{pre:lemma 1} further extends the framework to integer linking variables with a general convex second-stage problem, including (but not limited to) system expansion planning~\cite{resilience_driven_deterministic}, unit commitment~\cite{stochastic_unit_commitment_10_scenarios_BD}, and DG siting and sizing~\cite{resilience_oriented_DG_siting_sizing}. For other convex formulations, Proposition~\ref{pre:preposition 1} guarantees convexity in $x$ but not necessarily piecewise linearity; in such a case, $Q(x,\xi)$ may be approximated by a piecewise linear function so that the framework remains applicable.
\end{remark}

\begin{remark}
 Under the common assumption of exogenous uncertainty~\cite{LineRoaldReviewStochastic}, and that $Q(x,\xi)$ is piecewise linear convex in $x$, $\mathbb{E_\xi} [Q(x,\xi)]$ is a piecewise linear convex function as it is the linear combination, with probabilities of scenarios as coefficients, of piecewise linear convex functions.
\end{remark}



    


\subsection{Value Function Approximation (VFA):}

The value function $\mathbb{E_\xi} [Q(x,\xi)]$ in objective function (\ref{eq:general opf first stage}) of first stage problem $\textbf{P}_{\textbf{g1}}$  is a function of linking variables $x$ and scenarios $\xi$ realized in the second stage. 
 Two-stage problems, by nature, are not separable. However, separable approximations of  value function have been widely
used and found to provide near-optimal solution, typically by assuming separability with respect to linking variables when they appear affinely in the second-stage constraints ~\cite{1990PowellStaticVFA,VFA_powell}. We also consider such an approximation, and the value function is written as a summation of coordinate functions $g_i(x_i)$
as given in~(\ref{eq:SPAR coordinate sum}) where $\mathcal{I}$ is the set of coordinate functions. 

\useshortskip
\begin{equation}\label{eq:SPAR coordinate sum}
    G(x) = \mathbb{E}_{\xi}[Q(x,\xi)] = \sum_{i = 1}^{|\mathcal{I}|}g_i(x_i)
\end{equation}

Let us assume $g_i(x_i)$ is a piecewise linear convex function with integer breakpoints in $x_i$ dimension.
Let $m_i^l$ be its slope between points $l\ \in \mathcal{L}_i$ and $l+1$ as given in~(\ref{eq: slope SPAR}). The value of the function at any point $x_i$ can be obtained using (\ref{eq:learned function value SPAR}) where $u$ is an integer point.
\useshortskip
\begin{equation}
    m_i^l  = g_i(l+1)\ -\ g_i(l) \label{eq: slope SPAR}
\end{equation}


\useshortskip
\begin{equation}
  \hspace*{-0.12in}
\begin{gathered}
g_i(x_i) =
\begin{cases}
g_i(0) - \sum_{l = u}^{-1} m_i^l + m_{i}^u (x_i - u) \, \text{ for } x_i < 0  \\
g_i(0) + \sum_{l = 0}^{u-1} m_i^l + m_{i}^u (x_i - u) \, \text{ for } x_i \geq 0  
\end{cases} \\
\text{where}\ u \leq x_i < u+1.
\label{eq:learned function value SPAR}
\end{gathered}
\hspace*{-0.12in}
\end{equation}





It is to be noted that, evaluating this function requires its value at origin, $g_i(0)$, which may not be available \emph{a priori}. The decision points, however, are not affected by this constant term, and hence, $g_i(0)=0$ can be assumed. 

Once the coordinate functions are learned, a formulation for (\ref{eq:SPAR coordinate sum}) is constructed, and the first stage problem  $\textbf{P}_{\textbf{g1}}$ can then be solved by considering only the first stage feasible set. In this regard, we propose two formulations. First, we introduce binary variables to approximate the function using a well-known polyhedral approach to model piecewise linear functions~\cite{lambda_method}, which we refer to as the \emph{lambda method} $(\textbf{L}_\textbf{1})$. However, introducing a large number of binary variables can significantly increase computational complexity. To mitigate this challenge, we propose a second approach that we refer to as the \emph{epigraph method} $(\textbf{E}_\textbf{1})$, which leverages the convexity of the value function by incorporating a continuous epigraph variable for each coordinate function.

\subsubsection{Lambda Method}\label{secsec: lambda method}


Once the slope of the coordinate function $g_i$ is learned, the function can be computed using (\ref{eq:learned function value SPAR}). We introduce binary variables $\lambda_i^l$ associated with each decision point $l \in \mathcal{L}$ of the coordinate function $g_i$ to approximate~(\ref{eq:SPAR coordinate sum}) as specified in the objective function in~(\ref{eq:lambda method obj}). Here, $g_i^l$ represents the value of the function $g_i$ at point $l$. 
Eq.~(\ref{eq:lambda method const1}) ensures that each coordinate function has a single decision point. The corresponding linking decision variable $x_i$ is then given by~(\ref{eq:lambda method const2}).





\useshortskip
\begin{equation}
       (\textbf{L}_\textbf{1}):\quad  \min~ H(x,y) +  \sum_{l \in \mathcal{L}_i}  \sum_{i \in \mathcal{I}} g_i^l \lambda_i^l \label{eq:lambda method obj}
       \end{equation}


\useshortskip
\begin{equation}
         \text{\hspace*{0.6in} s.t. } ~~  \sum_{l \in\mathcal{L}\,} \lambda_i^l = 1 ~~ \forall\ i \in \mathcal{I} \quad\quad \label{eq:lambda method const1}
\end{equation}

\useshortskip
\begin{equation}
          \sum_{l \in \mathcal{L}} \lambda_i^l l = x_i ~~\forall\ i \in \mathcal{I} \label{eq:lambda method const2}
\end{equation}

\useshortskip 
\begin{equation*}
  \text{planning constraints (\ref{eq:general first stage constraint})}
\end{equation*}

\useshortskip 
\begin{equation*}
     \lambda_i^l \in \{0,1\}, ~ \forall\ l \in \mathcal{L}_i, \forall\ i \in \mathcal{I}
\end{equation*}


\begin{remark} 
  Constraint (\ref{eq:lambda method const2}) ensures that $x_i$ is integral in any feasible solution even when it is modeled as a continuous variable, since $l \in \mathcal{L}_i$ is an integer parameter and $\lambda_i^l \in \{0,1\}$.
\end{remark} 

\subsubsection{Epigraph method}\label{secsec:epigraph method}


Since the learned function is piecewise linear and convex with integer breakpoints, minimizing the function~(\ref{eq:general opf first stage}) after replacing the expectation term with~(\ref{eq:SPAR coordinate sum}) is equivalent to minimizing the function~(\ref{eq:epigraph obj}) with additional constraints~(\ref{eq:epigraph const 1}). 
The function $g_i$ is approximated by the affine function using slope information $m_i^l$ at each integer breakpoint $l$. This function is then upper bounded by the epigraph variable $t_i$. 



\useshortskip
\begin{equation}
       (\textbf{E}_\textbf{1}):\quad  \min\ H(x,y) +  \sum_{i\in \mathcal{I}}\, t_i \label{eq:epigraph obj}
\end{equation}


\useshortskip
\begin{equation}
    \text{s.t. }  \quad     m_i^l\,(x_i -  l) + g_i^l \leq t_i ~ \forall\ l \in \mathcal{L}_i, \forall\ i \in \mathcal{I} \label{eq:epigraph const 1}
\end{equation}

\useshortskip
\begin{equation*}
          \text{planning constraints (\ref{eq:general first stage constraint})}
\end{equation*}


\vspace{0.2em}
\subsection{Separable Projective Approximation Routine (SPAR)}\label{sec:SPAR}

In this section, we present the SPAR algorithm from~\cite{VFA_powell} for learning the coordinate functions $g_i$. Any term with the superscript $(k)$ refers to its value at iteration $k$. For simplicity, we represent $g_i(x_i)$ as $g(x)$ and $m_i^l$ as $m_l$, which is the slope of function $g(x)$ between points $l$ and $l+1$. 
The overall goal is to learn and update the slopes iteratively such that the sequence $\{m^{(k)}\}$ converges to the true slope $\overline{m}$. In other words, the function $\{g^{(k)}\}$  constructed from the learned slopes using (\ref{eq:learned function value SPAR}) is the functional approximation of the true function $g$. Algorithm~\ref{algo:SPAR general} specifies how to learn the slopes of the coordinate function $g$. First, the slope vector $m^{(1)}$ is initialized to the zero vector and the maximum number of iterations is chosen as $k^{\max}$. Then, the first stage decision point $l^{(k)}$ is obtained by solving the first stage problem $\textbf{P}_{\textbf{g1}}$ after replacing the expected function by approximated function $G^{(k)}$ using~(\ref{eq:SPAR coordinate sum}). Notice that in initial iterations, exploration can be employed by randomly choosing decision points, rather than exploiting the optimal decisions in every iteration. After this, a random variable $\gamma^{(k)}$ is observed, which is $H_k$ measurable,
and its expectation is equal to the true slope $\overline{m}_{l^{(k)}}$. As discussed in~\cite{VFA_powell}, $\gamma^{(k)}$ satisfies this condition when Lagrange multipliers are taken after solving the second-stage problem for the given first-stage decision $l^{(k)}$. Next, the slope is updated at point $l^{(k)}$ as shown in step 5. Step size $\{\alpha^{(k)}\}$ is also $H_k$ measurable and assumed to satisfy \(\sum_{k = 1}^\infty \alpha^{(k)} = \infty\) and \(\sum_{k = 1}^\infty (\alpha^{(k)})^2 < \infty\). The updated slope vector $n^{(k)}$ is then projected on a convex set $\mathcal{M}$ given by (\ref{eq: proj const}) to give slope vector $m^{(k+1)}$ for the next iteration to ensure the convexity of the learned function. Finally, the function $G^{(k)}$ is updated to give $G^{(k+1)}$ and the process continues until $k^{\max}$ iterations are reached or the termination criteria are met. One of the termination criteria considered in this paper is that the algorithm terminates when the moving average of the objective value at the current iteration falls within an $\epsilon$-radius of the moving average of the previous iteration.

The work in~\cite{VFA_powell} had assumed relatively complete recourse, which means the second stage is always feasible for any first-stage decisions and slope can be updated with Lagrange multipliers. However, this assumption may not always hold. To make the framework more robust, we enhance SPAR algorithm by incorporating feasibility cuts that ensure the feasibility of the obtained decision by not allowing the search to go into infeasible directions. Specifically, if  second stage problem is not feasible during an iteration, Farkas duals $\mu_\zeta$ corresponding to linking constraint~(\ref{eq:general second stage constraint I}) are obtained, and the feasibility cut of the form~(\ref{eq:feasibility_cut}) is added to the first stage problem~\cite{introduction_to_stochastic_programming}.

\useshortskip
\begin{equation}
     \mu_\zeta ^Tx \geq 0
    \label{eq:feasibility_cut}
\end{equation}


\begin{algorithm}[t]
    \caption{Separable Projective Approximation Routine}  \label{algo:SPAR general}
    \begin{spacing}{0.8}
    \begin{algorithmic}[1]
        \STATE Initialize: $k=1, m^{(1)}=\textbf{0}^L$ 
        \FOR {$k = 1,2,\dots,k^{\max}$}
            \STATE Get $l^{(k)} \in [\underline{L}, \overline{L}]$ after solving $\textbf{P}_{\textbf{g1}}$
            \STATE Observe scenario and get random variable $\gamma^{(k)}$ after solving $\textbf{P}_{\textbf{g2}}$ such that,\\ 
            $\mathbb{E}(\gamma^{(k)} | l^{(1)},\dots, l^{(k)}, \gamma^{(1)}, ..., \gamma^{(k-1)}) = \overline{m}_{l^{(k)}}$
            \STATE Update slope as:
            \begin{equation*}
                n_x^{(k)} = \begin{cases}
                    
                (1 - \alpha^{(k)}) m_x^{(k)} + \alpha^{(k)}\gamma^{(k)}\ \textbf{if}\ x = l^{(k)},\\
                m_x^{(k)}\ \textbf{if}\ x \neq l^{(k)}
                \end{cases}
            \end{equation*}

            \STATE Project $n^{(k)}\ \text{to convex set}\ \mathcal{M}\ \text{to yield}\ m^{(k+1)}$
            \STATE Check the termination criteria
        \ENDFOR
    \end{algorithmic}
    \end{spacing}
\end{algorithm}

\vspace{0.1em}
\subsection{Projection Algorithm}\label{sec:Projection algorithm}
In this section, we present a simple projection rule that meets optimality conditions while avoiding optimization, thereby significantly accelerating the algorithm~\cite{VFA_powell}.
The projection of updated slope vector $n$ on convex set $\mathcal{M}$ is:
\useshortskip
\begin{equation}
    \min_{m}\ \frac{1}{2}\|m-n\|  \label{eq: proj obj} 
    \end{equation}

\useshortskip


\useshortskip
\begin{equation}
  \text{\quad s.t. } \quad  m_l - m_{l+1} \leq 0 \quad \forall\ l \in \{\underline{L},\underline{L}+1, \dots, \overline{L}\} \label{eq: proj const}
\end{equation}
Taking $\beta_l$ as dual multipliers of (\ref{eq: proj const}), the optimality conditions for this problem can be written as, 


\useshortskip
\begin{equation} 
     m_l = n_l + \beta_{l-1} - \beta_{l} \quad \forall\ l \in \{\underline{L}+1,\underline{L}+2,\dots, \overline{L}\} \label{proj:opt I}
 \end{equation}

\useshortskip
 \begin{equation}
     \beta_l(m_l - m_{l+1}) = 0 \quad \forall\ l \in \{ \underline{L},\underline{L}+1, \dots, \overline{L}\} \label{proj:opt II} 
\end{equation}


Consider coordinates \(p\) and \(q\) with \(p \leq q\) such that,
\useshortskip
\begin{equation} \label{proj: equal equation}
m_{p-1} < m_p = ... = d = ... = m_q < m_{q+1}
\end{equation}

Adding (\ref{proj:opt I}) from \(l = p\) to \(l = q\) and using (\ref{proj: equal equation}), we obtain

\useshortskip
\begin{equation}
    \sum_{l = p}^{q} m_l = \sum_{l = p}^{q} n_l + \beta_{p-1} - \beta_q
    \end{equation}

\useshortskip
\begin{equation}
    (q - p + 1) d = \sum_{l = p}^{q} n_l + \beta_{p-1} - \beta_q \label{proj:expression}
\end{equation}

From (\ref{proj:opt II}) and (\ref{proj: equal equation}),\ \(\beta_{p-1} = 0 ,\beta_q = 0\), (\ref{proj:expression}) becomes,

\useshortskip
\begin{equation}\label{proj: d expression}
     d = \frac{1}{(q - p + 1)}\sum_{l = p}^{q} n_l
\end{equation}

In Algorithm~\ref{algo:SPAR general}, the slope vector is updated at a single coordinate, say $l^{(k)}$ at iteration $k$. Considering $n_{l}^{(k)}$ as a slope at point $l$ at iteration $k$, the updated slope at the current chosen point $l^{(k)}$ would be $n_{l^{(k)}}^{(k)}$. Since the slope is updated at only one point, the following three conditions can occur.

\begin{itemize}[leftmargin=*]
    \item \emph{Condition 1}: $n_{l^{(k)}-1}^{(k)} \leq n_{l^{(k)}}^{(k)} \leq n_{l^{(k)}+1}^{(k)}$\\
    The updated slope maintains the convexity. Hence, no further update is required.
    
    \item \emph{Condition 2}: $n_{l^{(k)}-1}^{(k)} > n_{l^{(k)}}^{(k)}$\\
    Find largest $\eta$ with $\underline{L} < \eta \leq l^{(k)}$ such that, 
    \useshortskip
    \[n_{\eta-1}^{(k)} \leq \frac{1}{l^{(k)} - \eta + 1} \sum_{l = \eta}^{l^{(k)}} n_l^{(k)}\]
    and assign, $m_l^{(k+1)} = d\quad \forall\ l \in \{\eta,~\eta +1,\dots,l^{(k)}\}$, where $d$ is obtained from (\ref{proj: d expression}) with $p = \eta\ \text{and}\ q = l^{(k)} $. If no $\eta$ is found, assign $\eta = \underline{L}$.
    
    \item \emph{Condition 3}: $n_{l^{(k)}}^{(k)} > n_{l^{(k)}+1}^{(k)}$\\
    Find smallest $\eta$ with with $l^{(k)} \leq \eta < \overline{L}$ such that,
    \useshortskip
    \[n_{\eta+1}^{(k)} \geq \frac{1}{\eta - l^{(k)} + 1} \sum_{l = l^{(k)}}^{\eta} n_l^{(k)}\]
    and assign, $m_l^{(k+1)} = d \quad \forall\ l \in \{l^k, l^k+1, \dots, \eta\}$ where $d$ is obtained from (\ref{proj: d expression}) with $p = l^k\ \text{and}\ q = \eta $. If no $\eta$ is found, assign $\eta = \overline{L}$.
    The Lagrange multipliers corresponding to the above solutions exactly matches with that of (\ref{proj:opt I}) and (\ref{proj:opt II})~\cite{VFA_powell}. 
\end{itemize}

        

\subsection{Statistical method for lower and upper bound}
Since the proposed framework uses sample subgradient information to approximate the function, we provide a statistical method for evaluating the quality of obtained solution through a posterior analysis, 
utilizing the information available during the process of solving the stochastic optimization problem.
We denote objective function $H(x,y) + \mathbb{E}_{\xi}[Q(x, \xi)]$ in~(\ref{eq:general opf first stage}) as  $F(x,y)$ and let $F^{*}$ be its optimal value. First, we provide the statistical upper bound, i.e., an estimate of the upper bound and its confidence interval at a chosen level. Let $(\hat{x},\hat{y})$ be the decisions obtained after solving (\ref{eq:general opf first stage})  where the function is approximated using SPAR-OPF taking batches of $N'$ samples from scenarios set $\Xi$. Consider that $p_\xi$ is the probability of scenario $\xi$ and $h_{|\Xi|}$ is the value of (\ref{eq:general opf first stage})  where expected function is calculated by simulating all scenarios independently in the second stage $\textbf{P}_\textbf{g2}$ fixing the first stage decision $(\hat{x}, \hat{y})$ as given in~(\ref{eq:UB epoch value}). This gives the upper bound for $F^*$. Taking expectation in the left side of~(\ref{eq:UB epoch value}) also gives the upper bound of $F^*$ as given in~(\ref{eq:UB expectation}) where (\ref{eq:general opf first stage}) is solved $M$ times taking batches of scenarios. The unbiased estimator, $\bar{h}_{|\Xi|}$, of $\mathbb{E}$($h_{|\Xi|}$) and variance $\sigma_M^2$ are given in (\ref{eq:UB mean}) and (\ref{eq:UB variance}), respectively. The confidence interval for the upper bound $UB_M$ with an approximated (1-$\alpha$) confidence is given in~(\ref{eq:UB CI}), which is justified by the central limit theorem with critical value $z_\alpha = f^{-1}(1-\alpha)$ where $f(z)$ is the cumulative distribution function of the standard normal distribution.  Note that if the value of $M$ is low, the $t$-distribution should be used.

\useshortskip
\begin{equation}
    h_{|\Xi|} =  c^T \hat{x} + d^T\hat{y} + \sum_{i\in 1}^{|\Xi|} p_{\xi}Q(\hat{x},\xi_i) \geq F^* \label{eq:UB epoch value}
\end{equation}

\useshortskip
\begin{equation}
 \mathbb{E}(h_{|\Xi|}) \geq F^{*} \label{eq:UB expectation}
\end{equation}

\useshortskip
\begin{equation}
    \bar{h}_{|\Xi|} =  \frac{1}{M}\sum_{i\in 1}^{M} h_{|\Xi|}^i \label{eq:UB mean}
\end{equation}

\useshortskip
\begin{equation}
     \sigma_M^2 = \frac{1}{M(M-1)}\sum_{i = 1}^{M} [\bar{h}_{|\Xi|} - h_{|\Xi|}^i]^2 \label{eq:UB variance}
\end{equation}

\useshortskip
 \begin{equation}
     UB_{M} = \bar{h}_{|\Xi|} \pm  z_{\alpha}\sigma_{M} \label{eq:UB CI}
\end{equation}

The lower bound for $F^*$ can be obtained by leveraging the sample average approximation. The sample average approximation of $F_{N^{'}}(x,y)$ with $N^{'}$ samples is an unbiased estimator of $F(x,y)$ as given in left equality in~(\ref{eq:LB relation_1}). If $p_{N^{'}}$ is an optimal value of (\ref{eq:general opf first stage}) for $N^{'}$ samples, then  $\mathbb{E}(p_{N^{'}})$ is a lower bound for $F(x,y)$ as given in~(\ref{eq:LB relation_1}). Taking minimum of $F(x,y)$ and using~(\ref{eq:LB relation_1}), we obtain (\ref{eq:LB relation_2}).
The unbiased estimator $\bar{p}_{N^{'}}$ of $\mathbb{E}(p_{N^{'}})$ and the variance are given in (~\ref{eq:LB mean}) and ($\ref{eq:LB variance}$), respectively, where problem (\ref{eq:general opf first stage}) is solved $M$ times taking $N^{'}$ samples. The confidence interval for the Lower bound  $LB_{N^{'}}$ of $F^*$ with an approximated (1-$\alpha$) confidence is given by~(\ref{eq:LB value}).



\useshortskip
\vspace{0.05em}
\begin{equation}
    F(x,y) = \mathbb{E}[{{F}_{N'}(x,y)}] \geq \mathbb{E}[{\min_{x,y} {F}_{N'}(x,y)]} = \mathbb{E}({p_{N'}}) \label{eq:LB relation_1}
\end{equation}

\useshortskip
\begin{equation}
    \min_{x,y} F(x,y) = F^* \geq \mathbb{E}{(p_{N'})} \label{eq:LB relation_2}
\end{equation}

\useshortskip
\begin{equation}
   \bar{p}_{N^{'}} =  \frac{1}{M}\sum_{i\in 1}^{M} p_{N^{'}}^{i} \label{eq:LB mean}
\end{equation}

\useshortskip
\begin{equation}
    \sigma_{N^{'}}^2 = \frac{1}{M(M-1)}\sum_{i = 1}^{M} [\bar{p}_{N^{'}} - p_{N^{'}}^{i}] \label{eq:LB variance}
\end{equation}

\useshortskip
\begin{equation}
   LB_{N^{'}} = \bar{p}_{N^{'}} \pm z_{\alpha}\sigma_{N^{'}} \label{eq:LB value}
\end{equation}


\begin{remark}
 Since SPAR-OPF solves the problem by approximating the value function, the approximation quality can be measured as the iterations proceed.  After approximating the function, either $\textbf{L}_\textbf{1}$ or $\textbf{E}_\textbf{1}$ is solved to get first stage decisions $(x^{(k)},y^{(k)})$, and then $H(x^{(k)},y^{(k)}) + \mathbb{E}_\xi [Q(x^{(k)},\xi)]$ is computed to measure quality of the approximated function.
 \end{remark}

\section{Stochastic DG Siting and Sizing Problem} \label{sec:DG siting as an example}
The proposed framework is implemented to solve the optimal DG siting and sizing problem in distribution systems. 
The distribution system is modeled as a graph $\mathcal{G(B,E)}$ where the node $i\in \mathcal{B}$ represent the bus and the edge $ij \in \mathcal{E}$ represents the line that connects nodes $i$ and $j$. We consider a three-phase unbalanced linear power flow model to represent the power flow physics for the distribution system~\cite{low2014convex}. The decisions variables and constraints associated with the two-stage problem are described below.

\subsection{First Stage} \label{sec:subsub:First stage}

 The objective function in the first stage problem $\textbf{D}\textbf{1}$ is given by (\ref{eq:first stage objective}), which minimizes the expected recourse cost $Q(x,\xi)$. The first stage involves identifying the DG siting decisions in each bus $i$, represented by the binary variable $\delta_i^{DG}$, and the corresponding number of DG units $U_i^{DG}$ to be installed adhering to the given budget. 
Constraint (\ref{eq: DG limits constraint}) ensures the size of installed DG is within the lower limit $\underline{P}^{DG}$ and upper limit $\overline{P}^{DG}$ where $u^{DG}$ is the rating of one DG unit.
Constraint (\ref{eq:Budget limit constraint}) ensures that the total cost of DG installation is within the budget $b$ where $c^{DG}_{pu}$ is the per kW cost of DG. Here, the cost of DG is assumed to vary linearly with DG size, and the fixed installation cost is not considered, which can be easily incorporated by just adding the constant term multiplied by the siting variable in the expression. Constraint~(\ref{eq: minim_maximum_DG_number bus constraint}) bounds the number of buses that DG can be installed between $\underline{N}^{DG}\ \text{and}\ \overline{N}^{DG}$. Contrary to prior works in~\cite{resilience_oriented_DG_siting_sizing} where DG locations were limited to keep the problem size small, we consider all buses as possible candidates for DG, which expands the feasibility space and enhances the quality of the solution. 


\useshortskip
\begin{equation}
    (\textbf{D}\textbf{1}):\quad  \min\ \mathbb{E}_\xi [Q(x,\xi)] \label{eq:first stage objective}
\end{equation}


\useshortskip
\begin{equation}
  \text{\quad\quad s.t. }   \underline{P}^{DG} \delta_i^{DG} \leq U_i^{DG} u^{DG} \leq \overline{P}^{DG} \delta_i^{DG} \quad \forall\ i \in \mathcal{B} \label{eq: DG limits constraint}
\end{equation}

\useshortskip
\begin{equation}
     \sum_{i=1}^{|\mathcal{B}|} U_i^{DG}u^{DG}c^{DG}_{pu} \leq b \label{eq:Budget limit constraint}
\end{equation}

\useshortskip
\begin{equation}
    \underline{N}^{DG} \leq \sum_{i=1}^{|\mathcal{B}|} \delta_i^{DG} \leq \overline{N}^{DG} \label{eq: minim_maximum_DG_number bus constraint}
\end{equation}

\subsection{Second Stage}\label{sec:sub:second stage problem}


The second stage OPF problem minimizes the voltage deviation in each node for a range of scenarios. Any parameter or variable associated with the scenario $\xi \in \Xi$ and phase $\phi \in \Phi$ is superscripted by $\xi$ and $\phi$, respectively.
The objective function is given in~(\ref{eq:second stage objective I}) where $v_i$ and $v_{i,ref}$ represent the observed voltage and reference voltage of bus $i$, respectively. 

\begin{equation}
        (\textbf{D}_\textbf{2}^\textbf{v}):  Q(x, \xi) = \min \sum_{i\in \mathcal{B}} \sum_{\phi \in \Phi}|v_i^{\phi,\xi} - v_{i,ref}^{\phi}| \label{eq:second stage objective I}
\end{equation}

Since the objective function~(\ref{eq:second stage objective I}) is non-smooth, it is reformulated as~(\ref{eq:second stage objective II}) by introducing extra variable $z_i^{\phi,\xi}$ and~(\ref{eq: objective reformulation constraint}), which ensures $z_i^{\phi,\xi}$ = $|v_i^{\phi,\xi} - v_{i,ref}^\phi|$.

\begin{equation}
    Q(x, \xi) = \min \sum_{i\in \mathcal{B}} \sum_{\phi \in \Phi} z_i^{\phi,\xi} \label{eq:second stage objective II}
\end{equation}

\begin{equation}
    -z_i^{\phi,\xi} \leq\ v_i^{\phi,\xi} - v_{i,ref}^{\phi} \leq z_i^{\phi,\xi}\quad  \forall\ i \in \mathcal{B},  \quad \forall\ \phi \in \Phi \label{eq: objective reformulation constraint}
\end{equation}




Constraint (\ref{eq: Linking constraint}) is the linking constraint between the first and second stages ensuring that the DG power $U_i^{DG, \phi,\xi}$ utilized in the given scenario is within the DG capacity decision made in the first stage. 
Constraints~(\ref{eq:active power balance constraint})--(\ref{eq:voltage bound constraint}) represent the three-phase unbalanced linearized power flow equations~\cite{low2014convex}. 
Active power balance for each bus is maintained by constraint~(\ref{eq:active power balance constraint}) where   $P_{i,load}^{\phi}$ denotes base active load of bus $i$ for phase $\phi$. $\epsilon^\xi_{i,PV}$ and $\epsilon^\xi_{i,load}$ represent uncertainties in PV and load, respectively. $\mathcal{P}_i$ and $\mathcal{C}_i$ are sets of parent and child buses of $i$ while $P_{ij}^\phi$ and $i_{sub}$ denote the active power flowing from bus $i$ to $j$ and substation bus index respectively. Similarly, reactive power balance at the given bus is given by~(\ref{eq:reactive power balance constraint}) where $Q$ represents the reactive power and $\theta^{DG}$ is the power factor angle of DG. Let $\textbf{v}_i^{\xi}$ denote the three-phase voltage square magnitude of bus $i$ while $\textbf{P}_{ij}^{\xi}$ and $\textbf{Q}_{ij}^{\xi}$  are three-phase active and reactive power flowing through line $ij$, respectively. Constraint~(\ref{voltage drop constraint}) is the voltage balance equation of line $ij$. $\textbf{r}_{ij}$ and $\textbf{x}_{ij}$ represent the resistance and reactance matrix of line $ij$, respectively. Constraint~(\ref{eq:voltage bound constraint}) bounds bus voltage square $\textbf{v}_i^{\xi}$ between minimum $\textbf{v}_{\min}$ and maximum $\textbf{v}_{\max}$ allowed values.

\begin{equation}
     \sum_{\phi \in \Phi}U_i^{DG,\phi,\xi} \leq U_i^{DG}u^{DG} \quad \forall\ i \in \mathcal{B} \label{eq: Linking constraint}
    \end{equation}

\useshortskip
\begin{equation}
\begin{gathered}
     U_i^{DG, \phi,\xi}\epsilon_{i,PV}^{\xi}- P_{i,{\rm load}}^{\phi}\epsilon_{i,{\rm load}}^\xi = \sum_{j \in \mathcal{C}_i} P_{ij}^{\phi,\xi}
     - \sum_{j \in \mathcal{P}_i} P_{ji}^{\phi,\xi} \\ \quad \forall\ i \in \mathcal{B}\backslash \{i_{sub}\},  \quad \forall\ \phi \in \Phi \label{eq:active power balance constraint}
\end{gathered}
\end{equation}

\begin{equation}
\begin{gathered}
     U_i^{\phi,\xi}\epsilon_{i,PV}^{\xi}\tan \theta^{DG} - Q_{i,{\rm load}}^{\phi,\xi}\epsilon_{i,{\rm load}}^\xi = \sum_{j \in \mathcal{C}_i}  Q_{ij}^{\phi,\xi}
    -\\ \sum_{j \in \mathcal{P}_i} Q_{ji}^{\phi,\xi} \quad
    \forall\ i \in \mathcal{B}\backslash \{i_{sub}\},\quad \forall\ \phi \in \Phi \label{eq:reactive power balance constraint}
\end{gathered}
\end{equation}


\useshortskip
\begin{equation}
     \textbf{v}_i^{\xi} - \textbf{v}_j^{\xi} = 2(\tilde{\textbf{r}}_{ij}\textbf{P}_{ij}^{\xi} + \tilde{\textbf{x}}_{ij}\textbf{Q}_{ij}^{\xi}) \quad \forall\ ij \in \mathcal{E} \label{voltage drop constraint}
\end{equation}

\useshortskip
\begin{equation*} 
\text{where, }~    \tilde{\textbf{r}}_{ij} = {\rm Real}\{\alpha \alpha^H\}\otimes \textbf{r}_{ij} + {\rm Im}\{\alpha \alpha^H\}\otimes \textbf{x}_{ij} \hspace{0.5in}\label{eq:rij related expression}
\end{equation*}

\useshortskip
\begin{equation*}
     \tilde{\textbf{x}}_{ij} = {\rm Real}\{\alpha \alpha^H\}\otimes \textbf{x}_{ij} + {\rm Im}\{\alpha \alpha^H\}\otimes \textbf{r}_{ij} \label{eq: xij related expression}
\end{equation*}

\useshortskip

\useshortskip
\begin{equation*}
     \alpha = [1\; e^{-j2\pi/3}\; e^{j2\pi/3}]^T
\end{equation*}

\useshortskip
\begin{equation}
     \textbf{v}_{\min} \leq \textbf{v}_i^{\xi} \leq \textbf{v}_{\max} \quad  \forall\ i \in \mathcal{B} \label{eq:voltage bound constraint}
\end{equation}

Constraint~(\ref{eq: line flow constraint}) ensures that power flowing through line $ij$  is within its thermal rating $\textbf{S}^{\rm rated}_{ij}$. This quadratic constraint is substituted with linear constraints (\ref{eq:line flow constraint I}), (\ref{eq:line flow constraint II}) and (\ref{eq:line flow constraint III}) using polygon based approximations~\cite{resilience_oriented_DG_siting_sizing}. 

\useshortskip
\begin{equation}
     \textbf{P}_{ij}^{\xi^2} + \textbf{Q}_{ij}^{\xi^2} \leq (\textbf{S}_{ij}^{\rm rated})^2 \quad \forall\ ij \in \mathcal{E} \label{eq: line flow constraint}
\end{equation}

\useshortskip
\begin{equation}
     -\sqrt{3}(\textbf{P}_{ij}^\xi + \textbf{S}_{ij}) \leq \textbf{Q}_{ij}^\xi \leq -\sqrt{3}(\textbf{P}_{ij}^\xi - \textbf{S}_{ij}) \quad \forall\ ij \in \mathcal{E} \label{eq:line flow constraint I}
\end{equation}

\useshortskip
\begin{equation}
    -\frac{\sqrt{3}}{2}\textbf{S}_{ij} \leq \textbf{Q}_{ij}^\xi \leq \frac{\sqrt{3}}{2}\textbf{S}_{ij} \quad \forall\ ij \in \mathcal{E} \label{eq:line flow constraint II}
\end{equation}

\useshortskip
\begin{equation}
     \sqrt{3}(\textbf{P}_{ij}^\xi - \textbf{S}_{ij}) \leq \textbf{Q}_{ij}^\xi \leq \sqrt{3}(\textbf{P}_{ij}^\xi + \textbf{S}_{ij}) \quad \forall\ ij \in \mathcal{E} \label{eq:line flow constraint III}
\end{equation}

\useshortskip

\useshortskip
\begin{equation*}
\text{where }~     \textbf{S}_{ij} = \textbf{S}_{ij}^{\rm rated}\sqrt{\frac{2\pi/6}{\sin(2\pi/6)}} \,. \hspace*{0.3in}
\end{equation*}

\begin{remark}
    $\textbf{D}_\textbf{2}^\textbf{v}$ can be adapted to minimize power loss by replacing objective function in~(\ref{eq:second stage objective II})  with a quadratic approximated power loss objective and removing (\ref{eq: objective reformulation constraint}); the resulting $\textbf{D}_\textbf{2}^\textbf{p}$ is given in~(\ref{eq:second stage objective power loss}). Voltage $v_i^{\phi,\xi}$ in the objective term is assumed to be nominal voltage $v_{i,nom}^{\phi}$~\cite{dubey_monograph}. We also analyzed SPAR-OPF with $\textbf{D}_\textbf{2}^\textbf{p}$, which incorporates a quadratic function in the second stage.
\end{remark}

\useshortskip
\begin{equation}
    (\textbf{D}_\textbf{2}^\textbf{p}):\quad Q(x, \xi) =  \min \sum_{ij\in \mathcal{E}} \sum_{\phi \in \Phi} \frac{(P_{ij}^{\phi,\xi})^2 + (Q_{ij}^{\phi,\xi})^2}{v_i^{\phi,\xi}}r_{ij} \label{eq:second stage objective power loss} 
\end{equation}
\begin{equation}
   \text{\quad s.t. } \quad (\ref{eq: Linking constraint})\text{--}(\ref{eq:line flow constraint III})
\end{equation}


\subsection{Load and PV Uncertainty Model}\label{subsec:load and PV uncertainty}
We consider the temporal relationship between load and PV outputs based on historical observations of load and PV profiles at a fixed time resolution.
For the PV output model, we leverage the PVwatts tool developed by the National Renewable Energy Laboratory (NREL)~\cite{PV_watt_manual}. The irradiance and temperature of the location are considered to obtain a normalized PV output, also called PV multiplier. Since our main focus is not on scenario generation and reduction, a simple but realistic model motivated by stratified and importance sampling is used in this work~\cite{risk_based_paper_abodh}. 

\begin{algorithm}
    \caption{Scenario Generation and Reduction Algorithm}\label{algo: scenario generation and reduction}
    \begin{spacing}{0.6}
    \begin{algorithmic}[1]
        \STATE \textbf{Input:} $\hat{\Xi}, N_t  \text{ and } \textbf{Output: } \Xi = \{\xi_1, \xi_2, ..., \xi_{N_t}\}$
        \STATE $N_D^{\rm agg} \gets \frac{|\hat{\Xi}|}{N_t}$

        \STATE initialize $s = 1, b = 1, e = N_D^{\rm agg} $
        \WHILE{$e \leq |\hat{\Xi}|$}
            \FOR {$h = 1:24$}
                \STATE $\alpha^{\rm load}(s) = \mathbb{E} [\alpha_h^{\rm load}(d)]_{d=b}^e$\\
                \STATE $\alpha^{PV}(s) = \mathbb{E} [\alpha_h^{PV}(d)]_{d=b}^e$
                \STATE $s \gets s+1$
            \ENDFOR
        
            \STATE $b \gets b + N_D^{\rm agg}, \quad e \gets e + N_D^{\rm agg}$

        \ENDWHILE
        
        \FOR {$s = 1:N_t$}
           \STATE $\alpha_i^{\rm load}(s) =  \alpha^{\rm load}(s) + \delta({\alpha_i^{\rm load}}(s)) \quad \forall\ i \in \mathcal{B}$ \label{stepload}
           \STATE $\alpha_i^{PV}(s) =  \alpha^{\rm load}(s) + \delta({\alpha_i^{PV}}(s))\ \quad \forall\ i \in \mathcal{B}$ \label{stepPV}
            
        \ENDFOR
        \STATE \textbf{Return:} $\Xi\ \text{where}\ \xi_s = [\alpha_i^{\rm load}(s), \alpha_i^{PV}(s) \quad \forall\ i \in \mathcal{B}]$
        
    \end{algorithmic}
    \end{spacing}
\end{algorithm}

The algorithm for scenario generation and reduction is presented in Algorithm~\ref{algo: scenario generation and reduction}, which takes  $\hat{\Xi}$ scenarios consisting of hourly resolution historical normalized load and PV output and provides $N_t$ output scenarios. The scenario index of scenario $\xi_s \in \Xi$ is denoted by $s$ in the algorithm. The number of days to be aggregated $N_D^{\rm agg}$ in a stratum is calculated based on total historical data available and the number of scenarios desired $N_t$. Then, aggregation is done for each hour for these aggregated days, the index of which starts from $b$ to $e$, to obtain scenarios. $\alpha_h^{\rm load} (d) \in \hat{\Xi}$  and $\alpha_h^{PV} (d) \in \hat{\Xi}$ denote normalized load and PV multiplier at hour $h$ of day $d$, respectively. $\mathbb{E} [\alpha_h^{\rm load}(d)]_{d=b}^e$ and $ \mathbb{E} [\alpha_h^{PV}(d)]_{d=b}^e$ denote the expectation of load and PV multiplier, respectively, corresponding to hour $h$ considering days indexed from $b$ to $e$. Then, once all 24 hours are done, the day's index will be sided by $N_D^{\rm agg}$ to obtain scenarios considering the next strata of aggregated days and so on until the last data point in $\hat{\Xi}$ is reached. 

This gives the uniform load $\alpha^{\rm load}(s)$, and PV $\alpha^{PV}(s)$, multiplier profile for each scenario $\xi_s$ for the given system. In reality, certain variations in load and PV profile exist among buses. Thus, the load and PV multiplier for each bus is made different as shown in steps \ref{stepload} and \ref{stepPV}, respectively, where $\delta({\alpha_i^{\rm load}}(s))$ and $\delta({\alpha_i^{PV}}(s))$ represent the small deviations from the assumptions of uniform shape for load and PV, respectively, for bus $i$ for scenario  $\xi_s$.
These deviations are generated from Gaussian random noise. Each process is assumed to be independent of the other and identically distributed.
The user can choose the standard deviation value depending on the diversity of historical profiles. Finally, the algorithm provides a scenario set $\Xi$, where each scenario contains load and PV multiplier corresponding to each bus. 






\subsection{SPAR-OPF algorithm for DG siting and sizing}

Algorithm~\ref{algo: SPAR for power systems planning} shows the overall algorithm using SPAR-OPF for DG siting and sizing problem which takes the power flow model, scenarios set, and other required parameters used in the first and second stage models
as input and outputs the planning decisions: DG size $U_i^{DG^{\text{final}}}$ and siting decisions $\delta^{DG^{\text{final}}}$. The number of coordinate functions to be learned equals the number of buses, as DG decisions for each bus are linked to the second stage and we consider all buses as potential locations. Consider L as the maximum number of DG units that can be installed in a bus. Let $m_i^{l,(k)} \in m_i^{(k)}$ denote the slope of the coordinate function $g_i$ at point $l\in \mathcal{L} = [0,L]$ during iteration $k$ where $m_i^{(k)}$ is its slope vector.  The slope $m_i^{l,(1)}$ of all coordinate functions is initialized to zero. The maximum number of iterations is set as $k^{\max}$. 
Then, at iteration $k$, with  $m_i^{(k)}$ and using~(\ref{eq:learned function value SPAR}), the value of $g_i$ is updated. $g_i^{l,(k)}$ denotes its value at $l$ in the $k^{th}$ iteration. The first stage problem is then solved using either $\textbf{L}_1$ or $\textbf{E}_1$ and DG decisions (linking variables) $U_i^{DG^{(k)}}$ are passed to the second stage. Scenario $\xi$ is selected randomly from the scenario set $\Xi$ and uncertain parameters specified in~(\ref{eq:active power balance constraint}) and (\ref{eq:reactive power balance constraint}) in the second stage problem~\ref{sec:sub:second stage problem} are updated. Then, the second stage problem is solved, and if it is feasible, the dual information $\gamma_i^{(k)}$ of the linking constraints (\ref{eq: Linking constraint}) is obtained.  The slope of each coordinate function at the given first stage decision $U_i^{DG^{(k)}}$ is then adjusted using $\gamma_i^{(k)}$ where updated step size $\alpha^{(k)}$ is utilized. 
Using the projection algorithm as described in Section \ref{sec:Projection algorithm}, updated slope vector $n_i^{(k)}$ is then projected to the convex set to obtain $m_i^{(k+1)}$, which is utilized in the next iteration. On the other hand, if the second stage is infeasible, a feasibility cut of the form~(\ref{eq:feasibility_cut}) is added to the first-stage problem. The process continues until the termination criterion is met.

\begin{algorithm}[tbhp]
    \caption{SPAR-OPF for DG Siting and Sizing}\label{algo: SPAR for power systems planning}
    \begin{spacing}{0.8}
    \begin{algorithmic}[1]
        \STATE \textbf{Input:} $G, \Xi, k^{\max}, \mathcal{L}, \underline{P}^{DG}, \overline{P}^{DG}, c^{DG}_{pu}, u^{DG}, b, \theta^{DG}, \textbf{x}_{ij}, \textbf{r}_{ij}$
        \STATE \textbf{Output:} $U_i^{DG^{\text{final}}},\delta_i^{DG^{\text{final}}}$
        \STATE $m_i^{l,(1)} \gets 0 \quad \forall\ i \in \mathcal{B},\  \forall\ l \in \mathcal{L}$  
        
        \FOR {$k = 1,2,\dots,k^{\max}$}
            \STATE Update $g_i^{l,(k)}\  \text{using}\ m_i^{(k)}$
            \STATE $U_i^{DG^{(k)}} \gets $ Solve first stage problem  (\ref{sec:subsub:First stage})
            \STATE Draw random scenario $\xi \in \Xi$
            \FOR {each bus $i \in \mathcal{B}$}
                \STATE update $ \epsilon_{i,PV}^\xi ,\epsilon_{i,{\rm load}}^\xi$ in (\ref{eq:active power balance constraint}) and (\ref{eq:reactive power balance constraint})
                \STATE update $U_i^{DG} \gets U_i^{DG^{(k)}}$ in (\ref{eq: Linking constraint})
            \ENDFOR
            \STATE Solve second stage problem
            (\ref{sec:sub:second stage problem})
            \IF {second stage is feasible}
                \STATE $\gamma_i^{(k)} \gets \text{Constraint} (\ref{eq: Linking constraint}$)  \COMMENT {Lagrange Multipliers}
                \STATE Update $\alpha^{(k)}$
                \FOR {$ \text{each bus}\ i \in \mathcal{B}$}
                    \FOR {$l = 0,1,2,\dots, L$}
                        \IF {$l = U_i^{DG^{(k)}}$}
                            \STATE $n_i^{l,(k)} = (1-\alpha^{(k)})m_i^{l,(k)} + \alpha^{(k)}\gamma_i^{(k)}$
                        \ELSE
                            \STATE $n_i^{l,(k)} = m_i^{l,(k)}$
                        \ENDIF
                        \STATE $m_i^{(k+1)} \gets n_i^{(k)}\ \text{using projection algorithm \ref{sec:Projection algorithm}}$
                    \ENDFOR
                \ENDFOR
            \ELSE 
                    \STATE Add feasibility cut~(\ref{eq:feasibility_cut}) to first stage problem (\ref{sec:subsub:First stage})
            \ENDIF
        \ENDFOR
        \STATE Check the termination criteria
        \STATE \textbf{Return:} $U_i^{DG^{\text{final}}} \gets U_i^{DG^{(k)}}, \delta_i^{DG^{\text{final}}} \gets \delta_i^{DG^{(k)}}$
        
    \end{algorithmic}
    \end{spacing}
\end{algorithm}





\section{Results and Discussions} \label{sec:Results}

The effectiveness and scalability of SPAR-OPF are illustrated using two test systems: IEEE 123-bus and a 9500-node (modified IEEE 8500-node) distribution test system.
We compare solutions from SPAR-OPF with solutions from EF, 
which provides a measure of optimality.
The solution obtained from the proposed framework is also compared with that from PH, which uses scenario-based decomposition. The heuristic parameter $\rho$ for PH is set to 1. We refer the reader to~\cite{progressive_hedging_latest} for additional discussion on PH.

 The default learning step size rule for SPAR-OPF, which we denote here as ``step rule 1" is taken as $ \alpha^{(k)} = 20/(20+k)$ as suggested by~\cite{VFA_powell}.  Furthermore, different step size rules are tested to measure the robustness of the framework with respect to step size selection. Specifically, we define ``step size rule 2" as  $\alpha^{(k)} = 1/k$, and ``step size rule 3" as $\alpha^{(k)} = min(1,20/k)$ to cover different varieties. $k^{\max}$ and $\epsilon-radius$ are set to 100 and $10^{-4}$. The results, including the effects of increasing number of scenarios and system size, are then presented along with statistical validation. Finally, the computational comparison between the two formulations $\textbf{L}_1$ and $\textbf{E}_1$ is made. In the remainder of this paper, $\textbf{D}_2^\textbf{v}$ refers to two-stage stochastic optimization problem with voltage deviation minimization as an objective in the second stage, while $\textbf{D}_2^\textbf{p}$ refers to the one with power loss minimization as an objective.   



\subsection{Simulation Set-Up}

We use Pyomo~\cite{bynum2021pyomo} to model SPAR-OPF, PySP package in Pyomo to model EF and PH, and Gurobi to solve all models~\cite{gurobi}. All the codes and data implemented to develop SPAR-OPF, EF and PH will be made publicly available once the paper is accepted. 
All simulations are carried out on a workstation with 32 GB RAM and an Intel Core i9-10900X CPU @ 3.70GHz. 

For simulations, we consider $\overline{N}^{DG} = 10$, $\underline{P}^{DG} = 33 \text{kW}$, $\overline{P}^{DG} = 333 \text{kW}$, $u^{DG} = 2 \text{kW}$, and $c_{pu}^{DG} = \$1.01$~\cite{Solar_cost}. A total budget $b$ of \$1500k is assumed. Note that all these parameters can be easily relaxed based on the user's preference and available resources. All results are generated by averaging the 25 independent random seeds. Unless otherwise mentioned, SPAR-OPF utilized the epigraph $\textbf{E}_1$ formulation. 

Scenarios are generated and reduced based on Algorithm~\ref{algo: scenario generation and reduction} described in section~\ref{subsec:load and PV uncertainty}. We add Gaussian noise with zero mean and 10\% standard deviation on load and PV profile to obtain bus profiles.
 Hourly resolution weather data is obtained from the National Solar Radiation Data Base (NSRDB) considering El Paso, the western area of Texas, which is served by the Electric Reliability Council of Texas (ERCOT) for the year 2023 \cite{NSRDB}. Hourly resolution historical load data for the year 2023 of the same area is taken from ERCOT~\cite{ERCOT}. This gives 8760 scenarios representing each hour of the year. Then, 96 and 1200 scenarios are generated for each system using Algorithm~\ref{algo: scenario generation and reduction}.  For instance, Fig.~\ref{fig:123 bus sytem different load multiplier} and Fig.~\ref{fig:123 bus sytem different PV multiplier} show the load and PV profile of 96 scenarios for IEEE 123-bus system, respectively, where solid line denotes the common shape and shaded curves are profiles for each bus. 


\begin{figure}[tbp]
\centering
  \begin{subfigure}{0.5\linewidth}
  \centering
    \includegraphics[width=\linewidth]{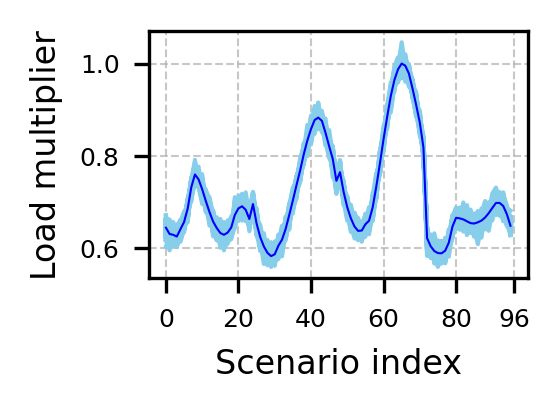}
    \caption{Bus load profile}
    \label{fig:123 bus sytem different load multiplier}
  \end{subfigure}%
  \begin{subfigure}{0.5\linewidth}
    \centering
    \includegraphics[width=\linewidth]{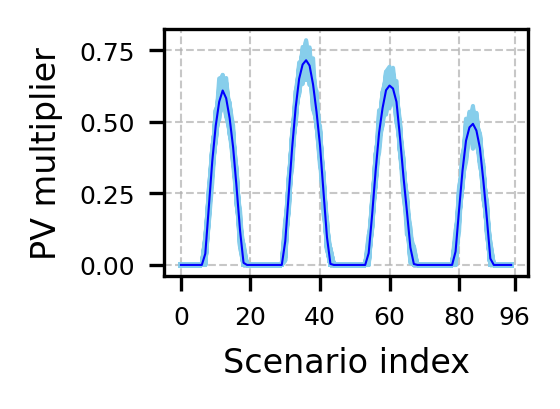}
    \caption{Bus PV profile}
    \label{fig:123 bus sytem different PV multiplier}
  \end{subfigure}
  \caption{Bus load and PV profile for IEEE 123 bus system.}
\end{figure}

\subsection{Illustration of SPAR-OPF}
In this section, we illustrate the performance of SPAR-OPF  on the IEEE 123 bus system with 96 and 1200 scenarios. The total demand is 3431.3 kW and 1931.9 kvar. We consider a system base of 1 MVA and voltage base of 4.16 kV.
\subsubsection{Case I: With 96 scenarios}

 
 First, we present a visualization that illustrates the learning of the coordinate functions by SPAR-OPF.  Recall that each coordinate function represents value as a function of the corresponding linking variable. Here, we learn a total of  123 coordinate functions, as DG sizing decisions (corresponding to each bus) are linked to the second stage. Fig.~\ref{fig:123 bus sytem coordinate learn functions} shows 9 such coordinate functions for the first five iterations with dark color showing the later iterations for problem $\textbf{D}_2^\textbf{v}$. The x-axis shows the decision points, and the y-axis shows the value of the function. It can be observed that in every iteration, the function updates its value due to changes in slope information from the algorithm. For the coordinate function $g_1$, the function remains constant at every decision point, i.e., equal to the initialized zero value. This function is associated with the substation bus, and installing DG at the substation bus does not add any value to the objective function considered here.
\begin{figure}[t]
    \centering
    \includegraphics[width=0.82\linewidth]{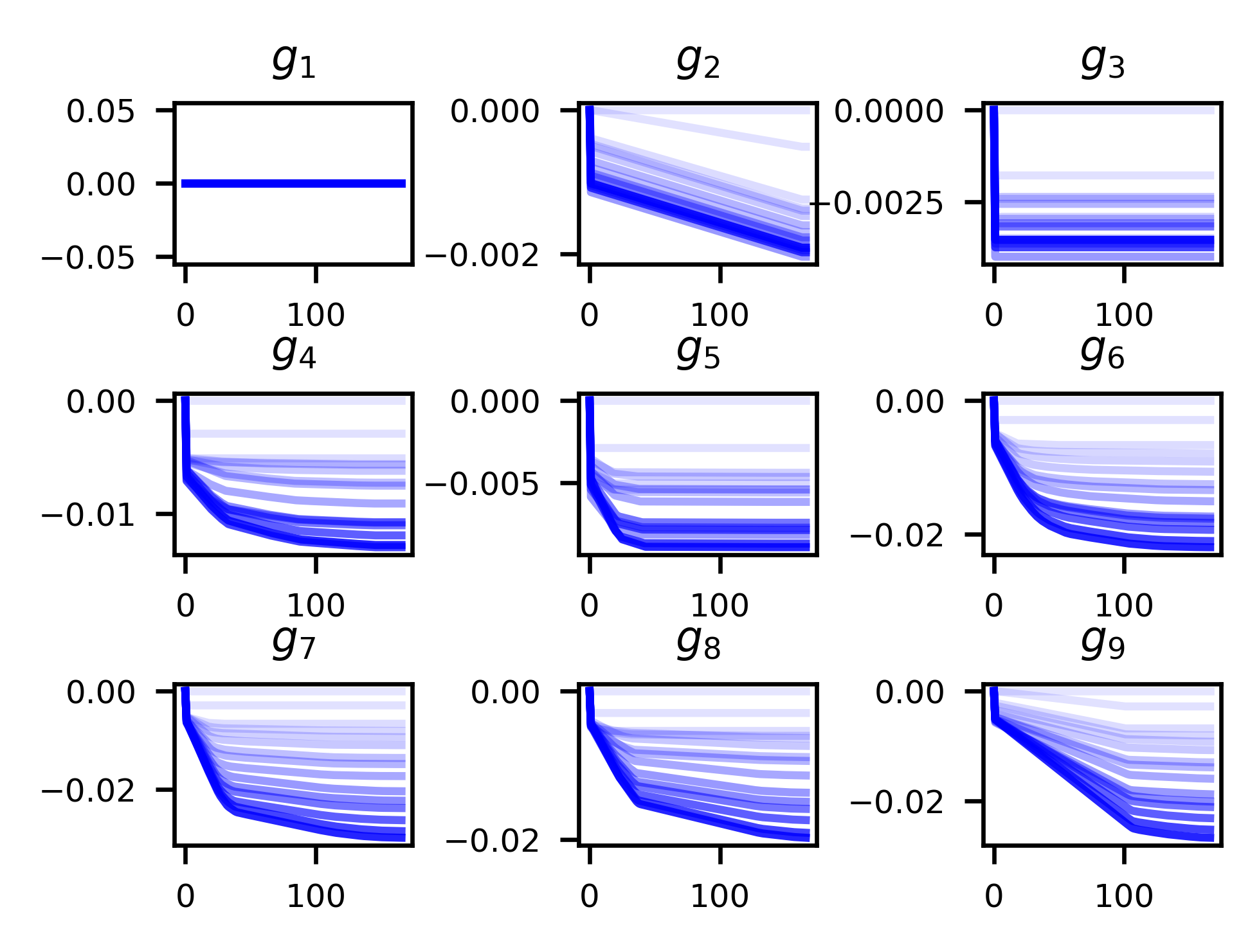}
    \caption{Illustration of coordinate functions learning. 
    } 
    \label{fig:123 bus sytem coordinate learn functions}
\end{figure}
Fig.~\ref{fig:quality for voltage expected function plot 123 bus system} shows the quality of function approximations with respect to iterations for $\textbf{D}_2^\textbf{v}$.
It can be seen that all step size rules give satisfactory performance with a quality of more than 98\%. The solution quality is consistent for $\textbf{D}_{2}^\textbf{p}$ as well, as shown in Fig.~\ref{fig:quality for power loss expected function plot 123 bus system}.

\begin{figure}[b]
\centering
  \begin{subfigure}{0.5\linewidth}
  \centering
    \includegraphics[width=\linewidth]{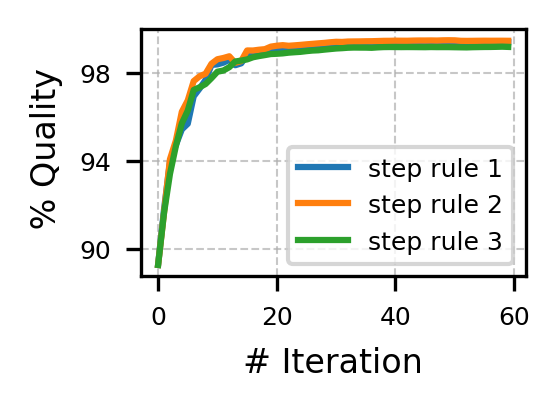}
    \caption{$\textbf{D}_\textbf{2}^\textbf{v}$}
    \label{fig:quality for voltage expected function plot 123 bus system}
  \end{subfigure}%
  \begin{subfigure}{0.5\linewidth}
    \centering
    \includegraphics[width=\linewidth]{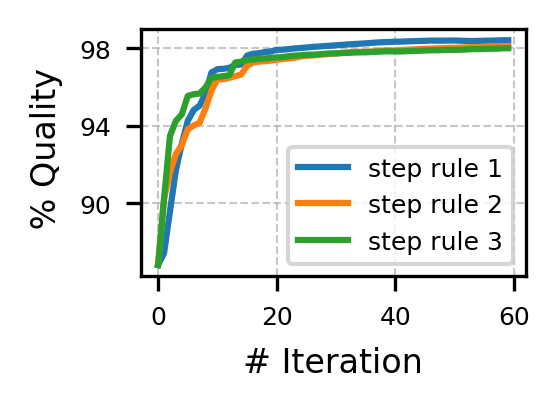}
    \caption{$\textbf{D}_\textbf{2}^\textbf{p}$}
    \label{fig:quality for power loss expected function plot 123 bus system}
  \end{subfigure}
  \caption{Quality of function approximation with SPAR-OPF.}
\end{figure}

Table~\ref{tab:IEEE 123 with 96 scenarios} presents the results from SPAR-OPF, EF, and PH where columns ``Obj" and ``Time" indicate the objective value and time taken  to solve the problem, respectively. It is important to mention that all algorithms were implemented on same computing environment. 
 For $\textbf{D}_2^\textbf{v}$, it is observed that SPAR-OPF objective value is very close to the true objective value (EF column) with a gap of 0.09 (0.44\%). The last row shows solutions for $\textbf{D}_2^\textbf{p}$ where the solution obtained from SPAR-OPF is very close to that of optimal value from EF with a gap of 0.6 kW (0.93\%). Note that the proposed framework gives a better solution than PH for $\textbf{D}_2^\textbf{v}$, whereas later performs better for $\textbf{D}_2^\textbf{p}$ for the considered problem; nevertheless, both solutions are close to the respective optimal values. 

\begin{table}[t]
\small
\centering
\setlength{\tabcolsep}{3pt}
\caption{IEEE 123 system with  96 scenarios}
\label{tab:IEEE 123 with 96 scenarios}
\begin{tabularx}{\linewidth}{ccccccc} \toprule
Problem & \multicolumn{2}{c}{\textbf{SPAR-OPF}} & \multicolumn{2}{c}{\textbf{EF}} & \multicolumn{2}{c}{\textbf{PH}} \\
 & Obj & Time(s) & Obj & Time(s) & Obj & Time(s) \\ \midrule
$\textbf{D}_\textbf{2}^\textbf{v}$ & 20.32 & 53.6 & 20.23 & 47.94 & 21.98 & 537 \\
$\textbf{D}_\textbf{2}^\textbf{p}$ & 64.7 kW & 58.8 & 64.1 kW & 686 & 64.24 kW & 6767\\ \bottomrule
\end{tabularx}
\raggedright \footnotesize{\noindent Note: The objective values for $\textbf{D}_\textbf{2}^\textbf{v}$ are unitless.}
\end{table}

SPAR-OPF and EF took comparable time to solve $\textbf{D}_2^\textbf{v}$, but PH took longer to solve. For the quadratic objective problem $\textbf{D}_2^\textbf{p}$, EF took 11.6x more time than SPAR-OPF. Due to the quadratic objective in the second stage, EF involves solving a huge MINLP problem, whereas SPAR-OPF solves smaller MILP in the first stage and QP in the second stage, thereby decreasing the overall solve time. Due to this reason, SPAR-OPF took comparable time to solve $\textbf{D}_2^\textbf{v}$ and $\textbf{D}_2^\textbf{p}$, whereas time for EF increases by 14x to solve the later. This is another advantage of the proposed framework, which decomposes the MINLP problem into easier sub-problems. PH takes a comparatively longer time to solve both problems. This performance time might get lower if PH had been implemented in a high-performance computing device. However, PH also involves  MINLP subproblems, which would take more time even if scenario subproblems are solved in parallel.

\subsubsection{Case II: with 1200 scenarios}

In this section, we illustrate statistical bounds and compare solution quality with EF and PH using a significantly larger number of scenarios, and analyze computational time.
Table~\ref{tab:IEEE 123 system with 1200 scenarios for voltage deviation} presents the results for $\textbf{D}_\textbf{2}^\textbf{v}$ considering 1200 scenarios. 
It is observed that EF for this case has 5.28 millions rows and 2.33 millions columns whereas case I has 0.42 millions rows and 0.186 millions columns, which illustrates the increasing complexity of the problem with an increase in the number of scenarios. 
The column $N_s$ in Table~\ref{tab:IEEE 123 system with 1200 scenarios for voltage deviation}  indicates the batch size taken for each random seed while solving the problem using SPAR-OPF. The columns ``Obj", ``LB CI", and ``UB CI" refer to objective value, lower bounds confidence interval and upper bound confidence interval for SPAR-OPF respectively. Similarly, column ``BG" refers to the bounds gap with \mbox{$(1-2\alpha)$}, where $\alpha = 0.05$, confidence level calculated based on the difference between the upper value of the upper bound confidence interval and the lower value of the lower bound confidence interval. The number within the bracket shows the percentage bounds gap to the objective value. Note that the objective value from SPAR-OPF lies within the lower bound and upper bound confidence interval. For smaller batch sizes, the width of the confidence interval and bounds gap are higher as the batch size is insufficient to capture the uncertainty set. The proposed framework obtains a solution closer, and better than the PH solution, to the optimal solution obtained from the EF.
The width of UB CI does not change that much as all scenarios are simulated, and less variability is observed.

It is important to mention that a bounds gap of $x\%$ does not necessarily mean that the objective value of SPAR-OPF is $x\%$ away from the optimal solution. 
It measures the tightness of bounds at the current solution. Take an instance, for case with $N_s = 200$, the objective value from SPAR-OPF is only 0.05\%, far from the optimal value though the BG\% is 2.1\%. It is observed that the proposed framework took 67.8 seconds per seed, whereas EF and PH took 1.86 hours and 2.89 hours to solve the problem, respectively.

Table~\ref{tab:IEEE 123 with 1200 scenario for power losss} presents results for problem $\textbf{D}_\textbf{2}^\textbf{p}$ with varying batch size where bounds gap and optimality gap are less than 4\% and 0.62\% respectively for batch size $N_s$ beyond 150.
EF, which involved solving a large MINLP problem, took 16.8 hours to complete, while PH, requiring the solution of 1200 MINLP problems per iteration, took 22.4 hours. On the other hand, SPAR-OPF took 79.6 seconds per seed 
to solve the same problem.
This shows the benefit of SPAR-OPF, which takes one scenario at a time, making it practical even for environments with limited resources.
\begin{table}[t]
\setlength{\tabcolsep}{3.5pt}
\small
\centering
\caption{IEEE 123 system with 1200 scenarios for $\textbf{D}_\textbf{2}^\textbf{v}$}
\label{tab:IEEE 123 system with 1200 scenarios for voltage deviation}
\begin{tabularx}{\columnwidth}{ccccccc}
\toprule
\multicolumn{1}{c}{$N_s$} & \multicolumn{1}{c}{Obj} & \multicolumn{1}{c}{LB CI} & \multicolumn{1}{c}{UB CI}  & \multicolumn{1}{c}{BG (\%)} & \multicolumn{1}{c}{EF} & \multicolumn{1}{c}{PH} \\ \midrule
50 & 16.99 & [16.67-17.31] & [17.41-17.53]& 0.86 (5.1) & \multirow{5}{*}{17.31} & \multirow{5}{*}{19.25} \\ \cline{1-5}
100 & 17.09 & [16.87-17.32] & [17.3-17.45] & 0.58 (3.4) &  &  \\ \cline{1-5}
150 & 17.25 & [17.08-17.42] & [17.41-17.52] & 0.44 (2.5) &  &  \\ \cline{1-5}
200 & 17.3 & [17.13-17.48] & [17.4-17.5] & 0.37 (2.1) &  &  \\ 
\bottomrule
\end{tabularx}
\raggedright \footnotesize{\noindent Note: All Obj, LB CI, UB CI and BG values are unitless.}
\end{table}

\begin{table}[t]
\setlength{\tabcolsep}{3.5pt}
\small
\centering
\caption{IEEE 123 system with 1200 scenarios for $\textbf{D}_\textbf{2}^\textbf{p}$}
\label{tab:IEEE 123 with 1200 scenario for power losss}
\begin{tabularx}{\columnwidth}{ccccccc}
\toprule
\multicolumn{1}{c}{$N_s$} & \multicolumn{1}{c}{Obj} & \multicolumn{1}{c}{LB CI} & \multicolumn{1}{c}{UB CI}  & \multicolumn{1}{c}{BG (\%)} & \multicolumn{1}{c}{EF} & \multicolumn{1}{c}{PH} \\ \midrule
50 & 48.53 & [47.04-50.02] & [50.26-50.62] & 3.58 (7.4) & \multirow{5}{*}{49.39} & \multirow{5}{*}{49.46} \\ \cline{1-5}
100 & 49.02 & [47.95-50.07] & [50.21-50.41] & 2.46 (5) &  &  \\ \cline{1-5}
150 & 49.7 & [48.9-50.52] & [50.33-50.83] & 1.93 (3.8) &  &  \\ \cline{1-5}
200 & 49.62 & [48.78-50.48] & [50.21-50.54] & 1.76 (3.5) &  &  \\ 
\bottomrule
\end{tabularx}
\raggedright \footnotesize{\noindent Note: All Obj, LB CI, UB CI and BG values are in kW.}
\end{table}

\subsection{Scalability assessment with system size}
We test the framework's scalability on the 9500-node system, a modified IEEE 8500-node test system, with a total load of 13,668 kW + 3,765 kvar~\cite{9500_nodes}, using a system base of 1 MVA and a voltage base of 12.47 kV. 
1200 scenarios are considered. 
EF would result in 112 million rows and 49.4 million columns, which is significantly challenging for any commercial solver to solve this optimization problem as a single large problem in polynomial time. Thus, EF and PH could not solve this problem as it became intractable using the considered workstation. Table~\ref{tab:9500 nodes with 1200 scenario for voltage deviation minimization}  and Table~\ref{tab:9500 nodes with 1200 scenario for power losss} present the results for the problem $\textbf{D}_\textbf{2}^\textbf{v}$ and $\textbf{D}_\textbf{2}^\textbf{p}$ respectively from SPAR-OPF. The obtained solutions are well within the tight bounds where the bounds gap is less than 3\% for the batch size $N_s \geq 150$. SPAR-OPF took 5.25 minutes and 5.5 minutes per seed to solve $\textbf{D}_\textbf{2}^\textbf{v}$ and $\textbf{D}_\textbf{2}^\textbf{p}$ respectively. This illustrates the scalability of proposed framework, making it suitable even for real-time applications under limited computing resources. 



\begin{table}[tbh]
\small
\setlength{\tabcolsep}{8pt}
\centering
\caption{9500-node system with 1200 scenarios for $\textbf{D}_\textbf{2}^\textbf{v}$}
\label{tab:9500 nodes with 1200 scenario for voltage deviation minimization}
\begin{tabularx}{\linewidth}{ccccc}
\toprule
\multicolumn{1}{c}{$N_s$} & \multicolumn{1}{c}{Obj} & \multicolumn{1}{c}{LB CI} & \multicolumn{1}{c}{UB CI} & \multicolumn{1}{c}{BG (\%)} \\ \midrule
50 & 622.7 & [609.9-635.5] & [622.5-639.1] & 29.2(4.68) \\
100 & 623.9 & [612.3-635.6] & [624.1-639.9] & 27.6 (4.42) \\
150 & 627.8 & [619.2-636.5] & [623.2-637] & 17.8 (2.8) \\
200 & 627.9 & [621.2-634.2] & [624.1-638] & 16.8 (2.7) \\
\bottomrule
\end{tabularx}
\raggedright \footnotesize{\noindent Note: All Obj, LB CI, UB CI and BG values are unitless.}
\end{table}

\begin{table}[tbhp]
\setlength{\tabcolsep}{8pt}
\small
\centering
\caption{9500-node system with 1200 scenarios for $\textbf{D}_\textbf{2}^\textbf{p}$}
\label{tab:9500 nodes with 1200 scenario for power losss}
\begin{tabularx}{\linewidth}{ccccc}
\toprule
\multicolumn{1}{c}{$N_s$} & \multicolumn{1}{c}{Obj} & \multicolumn{1}{c}{LB CI} & \multicolumn{1}{c}{UB CI} & \multicolumn{1}{c}{BG (\%)} \\ \midrule
50 & 357 & [349.2-364.8] & [367.2-370.9] & 21.7 (6.1) \\
100 & 361.2 & [355.5-366.9] & [368.2-371.9] & 16.4 (4.5)\\
150 & 364.1 & [360.2-367.9] & [367.7-371.1] & 10.9 (3.0) \\
200 & 364.8 & [361.2-368.4] & [367.9-371.8] & 10.6 (2.9) \\
\bottomrule
\end{tabularx}
\raggedright \footnotesize{\noindent Note: All Obj, LB CI, UB CI and BG values are in kW.}
\end{table}

\subsection{Computational comparison between  $\textbf{L}_\textbf{1}$ and  $\textbf{E}_\textbf{1}$}
This section compares 
the computational performance of two formulations, $\textbf{L}_\textbf{1}$ and  $\textbf{E}_\textbf{1}$. Table~\ref{tab:time comparison between lambda and epigraph method} reports results for different test cases and objectives, where columns $\textbf{D}_\textbf{2}^\textbf{v}$ and $\textbf{D}_\textbf{2}^\textbf{p}$ denote per-iteration solve times.
The number of rows and columns for $\textbf{D}_\textbf{2}^\textbf{v}$ is slightly higher due to additional constraint~(\ref{eq: objective reformulation constraint}) and number for $\textbf{D}_\textbf{2}^\textbf{v}$ is reported. Leveraging the convexity of value function and using only a few additional continuous variables,  $\textbf{E}_\textbf{1}$ is observed to perform better (4x less time) than  $\textbf{L}_\textbf{1}$. Even for larger system size,  $\textbf{E}_\textbf{1}$ required only 5 seconds per iteration. Based on all simulations, 80 iterations were required on average for convergence. This confirms that the SPAR-OPF applies to realistic large-scale power systems.  


\begin{table}[tbhp]
\centering
\small
\setlength{\tabcolsep}{2pt}
\caption{Computational comparison between $\textbf{L}_\textbf{1}$ and $\textbf{E}_\textbf{1}$}
\label{tab:time comparison between lambda and epigraph method}
\begin{tabularx}{\columnwidth}{cccccc} \toprule
System & Method & $\textbf{D}_\textbf{2}^\textbf{v}(s)$ & $\textbf{D}_\textbf{2}^\textbf{p}(s)$ & Rows & Columns \\ \midrule
\multirow{2}{*}{\begin{tabular}[c]{@{}c@{}}IEEE 123 bus \\ system\end{tabular}} &  $\textbf{L}_\textbf{1}$ & 2.08 & 2.1 & 524 &  21.9k (21.8k binary)\\
 &  $\textbf{E}_\textbf{1}$ & 0.67 & 0.69 & 21.9k & 391 (130 binary) \\ \midrule
\multirow{2}{*}{\begin{tabular}[c]{@{}c@{}}9500-node \\ system\end{tabular}} &  $\textbf{L}_\textbf{1}$ & 20.1 & 20.3 & 10.9k & 463.7k (414.3k binary) \\
 &  $\textbf{E}_\textbf{1}$ & 5 & 5.03 & 463.7k & 8.2k (2.7k binary)\\ \bottomrule
\end{tabularx}
\end{table}

\section{Conclusion and Future Work} \label{sec:Conclusion}

We developed the SPAR-OPF framework to solve OPF-based two-stage stochastic optimization problems in power systems. To the best of our knowledge, this is the first application of separable projective approximations of the value function to such problems. 
We provided two formulations for learned functions and a statistical validation procedure. The framework is demonstrated on two standard three-phase unbalanced distribution system test cases for DG siting and sizing problem, showing near-optimal solutions with significantly lower solve times compared to state-of-the-art methods.
The scalability of the framework is validated on a larger system with a higher number of scenarios, making it suitable even for resource-constrained environments.
As future work, we will explore heuristics such as  MIP relaxation, exploration vs. exploitation, and successive MIP gap reduction to further reduce solution time and extend the framework to classical stochastic problems such as unit commitment and dynamic economic dispatch.

\ifCLASSOPTIONcaptionsoff
  \newpage
\fi

\bibliographystyle{IEEEtran}
\end{document}